\newtheorem{definition}{Definition}
\newtheorem{lemma}{Lemma}
\newtheorem{theorem}{Theorem}
\newtheorem{question}{Question}
\begin{document}

\title{Efficient Algorithms for Injectivity and Bounded Surjectivity of One-dimensional Nonlinear Cellular Automata}

\author{Chen Wang\inst{1}
\and Junchi Ma\inst{2} 
\and Weilin Chen\inst{3}
\and Defu Lin\inst{4}
\and Chao Wang\inst{5}\email{wangchao@nankai.edu.cn}
}

\institute{$^{1,2,3,4,5}$College of Software, Nankai University}

\def\received{Received 17 December 2004; In final form 1 April 2005}

\maketitle

\begin{abstract}
Nonlinear cellular automata are extensively used in simulations, image processing, cryptography, and so on. The determination of their fundamental properties, injectivity and surjectivity, related to information loss during the evolution, is necessary in various applications. Currently, people still use Amoroso's algorithms for injectivity and surjectivity determinations, but this incurs significant computational costs when applied to complex nonlinear cellular automata. We have optimized Amoroso's surjectivity algorithm, improving its operational efficiency greatly and extended its applicability to various boundaries. Furthermore, we have introduced new theorems and algorithms for determining injectivity, which offer substantial improvements over Amoroso's algorithm in both time and space. With these new algorithms, we are equipped to determine the properties of larger and more complex cellular automata, thereby employing more advanced cellular automata to achieve increasingly complex functionalities. 
\end{abstract}

\keywords{Cellular automata, Surjectivity, Injectivity, Boundary}

\section{Introduction}
Cellular automaton (CA) is a system of finite automata (cells) that interact locally and operate in parallel. This simplifies and simulates many continuous, complex natural models by analyzing discrete automata evolution.

John von Neumann proposed CA in 1951 to simulate cellular self-replication in biological development \cite{1951_Neumann}. Moore proved that mutually erasable configurations imply the existence of the Garden-of-Eden \cite{1962_Moore}, and Myhill related this to CA surjectivity \cite{1963_Myhill}. In the 1970s, Conway's "Game of Life" used two-dimensional CA to simulate life processes \cite{1970_Convey}. Amoroso and Patt developed algorithms to determine CA surjectivity and injectivity \cite{1972_Amoroso, 1975_Amoroso}. Bruckner showed the necessity of rule balance for CA surjectivity and injectivity \cite{1979_Bruckner}. Wolfram's research in the 1980s advanced one-dimensional CA theory \cite{1983_Wolfram,1984_Wolfram,1984__Wolfram,1986_Wolfram,2002_Wolfram}. Culik, Head, and Sutner further explored CA properties in the late 1980s and early 1990s \cite{1987_Culik, 1989_Head, 1991_Sutner}. Kari's 1994 work highlighted the undecidability of two-dimensional CA injectivity and surjectivity, underscoring the importance of one-dimensional CA decisions \cite{1994_Kari}. Key summaries of CA research are found in Wolfram and Kari's works \cite{1985_Wolfram, 2005_Kari}.

Post-1990s, CA theory matured, leading to diverse CA adaptations for natural and societal phenomena. Notably, linear cellular automata (LCA) were introduced in 1983, simplifying complexity with transition matrices \cite{1983_Ito}. Martín del Rey analyzed LCA reversibility periods \cite{2006_Rey, 2007_Rey, 2011_Rey, 2015_Rey}, while Wang's team used DFA to efficiently calculate LCA periodicity \cite{2015_ChaoWang, 2022_ChaoWang}. Freezing cellular automata (FCA), proposed by Goles in 2015, simulate unidirectional systems like the SIR epidemic model \cite{2015_Goles, 2020_Goles, 2020_Goles2, 2021_Goles}. Number-conserving cellular automata (NCCA), proposed by Wolnik in 2017, excel in simulating object movement and enumeration \cite{2017_Wolnik, 2020_Wolnik, 2022_Wolnik, 2020_Wolnik2}.

No matter the characteristics of specialized CA, surjectivity and injectivity remain inevitable problems. When a new CA is introduced, determining its surjectivity and injectivity is crucial. Although theoretical analysis can be conducted for each CA, it can be time-consuming. Efficient and universal decision algorithms for determining surjectivity and injectivity would be highly beneficial. They can not only serve as a tool for determining the properties of general CA but also be used for exploring and verifying the properties of specific types of CA.

The surjectivity and injectivity determination algorithms proposed by S. Amoroso and Y. N. Patt are foundational and crucial methods. These algorithms aim to identify whether a CA system is surjective or injective, meaning whether the state transitions of the system are unique or cover all possible states. Among these, the surjectivity algorithm is praised for its efficient performance, owing to its use of trees as data structures. However, for the injectivity determination, they opted to use tables as data structures, which led to a significant performance decline compared to the surjectivity algorithm.

Detailed analysis of the injectivity algorithm by S. Amoroso and Y. N. Patt shows that, without the aid of hash tables, its time complexity is actually as high as $O(s^{4m})$ \cite{1972_Amoroso}. It is only when hash tables are introduced as auxiliary structures that the complexity can potentially be reduced to $O(s^{2m})$. Given this, there is potential for improving these algorithms, especially in finding a new algorithm that achieves $O(s^{2m})$ complexity without relying on hash tables. Such an algorithm would not only maintain low complexity but also further reduce space and time consumption, thereby enhancing the efficiency and practicality of CA system analysis.

We improve Amaroso's algorithm for surjectivity and enhance the efficiency of the algorithm greatly. This algorithm is nearly the most authoritative in this field, so it seems that Amoroso's surjectivity algorithm has reached the state of art with the help of our improvement. This algorithm can solve the decision of the global (infinite) surjectivity. Moreover, considering the feasibility and construction difficulty of CA models in practical applications, researchers often shift their focus to studying CA with a finite number of cells, known as finite CA. Unlike the theoretical models of infinite CAs, finite CA models limit the number of cells by introducing boundary conditions, such as null (fixed) boundaries \cite{2006_Rey,2011_Akin,2008_Sahoo,2019_Rey}, periodic boundaries \cite{2004_Nobe,2007_Rey,2007_Bingham}, reflective boundaries\cite{2012_Akin,2014_Akin}. This paper presents decision algorithms for surjectivity compatible with any CA under various boundary conditions inspired by Amaroso \cite{1972_Amoroso}. We also extended the surjectivity determination algorithm to the injectivity problem, arriving at new conclusions and presenting a new determination algorithm. The new injectivity algorithm employs entirely new theorems while retaining the efficient tree structure from the surjectivity algorithm. This new algorithm achieves quadratic time complexity without using hash tables, offering better performance in terms of both time and space costs.

Apart from the introduction, this paper contains six sections. In Section \ref{s2}, we introduce the mathematical definitions of CA and the mathematical concepts required in this paper. In Section \ref{s3}, we simplify Amoroso's algorithm for surjectivity and show its modified version compatible with different boundaries. Through the analysis of Amaroso's algorithm in Section \ref{s4}, we summarize a series of concise theorems and an efficient algorithm for global injectivity, analyze its complexity and compare the actual running time of the two algorithms. Section \ref{s5} shows the decision algorithms for the injectivity of CA with different boundaries, proves the equivalence between global injectivity and injectivity with the periodic boundary and extends our results from nonlinearity to linearity. Section \ref{s6} summarizes all the work of this paper and describes future work and open problems.

\section{Preliminaries \label{s2}}
In this section, we give the formal definitions used in the article. We first define the cellular automata (CA) and relevant problems. Then We define different boundaries of one-dimensional (1d) CA. Finally, we give some convenient mathematical expressions that will appear in our results.
\subsection{Cellular automata and relevant problems \label{s21}}

Cellular automaton (CA) is the most basic and important concept in this paper. It is usually defined by a quadruple: $CA = \{d,S,N,f\}$
\begin{itemize}
	\item $d \in \mathbb{Z}_+$ defines the dimension of CA space, which is generally one-dimensional or two-dimensional. A $d$-dimensional space is denoted as $\mathbb{Z}^d$ in this paper.
	\item $S=\{0,1,\ldots, p-1\}$ is a finite set of states representing a cell's state in CA. At any time, the state of each cell is an element in this set. We use $p$ to count the number of elements in this set.
	\item $N=(\vec n_1, \vec n_2, \ldots , \vec n_m)$ represents neighbor vectors, where $\vec n_i \in \mathbb{Z}^d$, and $\vec n_i \neq \vec n_j$ when $i \neq j$ $(i, j = 1, 2,\ldots, m)$. Thus, the neighbors of the cell $\vec n \in \mathbb{Z}^d$ are the $m$ cells $\vec n + \vec n_i, i = 1, 2,\cdots, m$. The neighborhood size of this CA is denoted as $m$ in this paper. We use $L+1+R$ to express a function containing $L$ neighbor on the left and $R$ neighbor on the right. For example, a basic rule can be expressed as $1+1+1$.
	\item $f$: $S^m \rightarrow S$ is the local rule (or simply the rule). The rule maps the current state of a cell and all its neighbors to this cell's next state.
\end{itemize}

Configuration is a snapshot of all automata in CA at a certain time, which can be represented by the function $c: \mathbb{Z} \rightarrow S$. These cells interact locally and transform their states in parallel. Function $\tau$ is used to represent this global transformation. If there are two configurations $c_1$, $c_2$ and $\tau(c_1) = c_2$, then $c_1$ is a predecessor of $c_2$ while $c_2$ is a successor of $c_1$.

The surjectivity of a $CA = \{d,S,N,f\}$ can be expressed as, exist an configuration $c_1$, for any configuration $c_2$, we have $\tau(c_2) \neq c_1$. That means there exists at least one configuration without any predecessors and we call this kind of configurations "Garden-of-Eden". For a CA with a boundary, we say this CA is surjective iff there is no "Garden-of-Eden" no matter the number of cells.

The injectivity of a $CA = \{d, S, N, f\}$ can be expressed similarly: for any two configurations $c_1, c_2$, if $\tau(c_1)=\tau(c_2)$, then there must be $c_1=c_2$. That means all configurations have one and only one predecessor. For a CA with a boundary, we say this CA is injective iff there is not a configuration that has multiple predecessors.

In this paper, we only focus on all one-dimensional CA $(d = 1)$. The followings are some advanced definitions of CA and some functions required in this paper.

\subsection{Boundaries of 1d CA \label{s22}}
The first CA has no boundary. With the development of CA, it is limited to a finite scale. There are many different boundaries of 1d CA. The most important ones studied at present are the null (fixed) boundary, the periodic boundary, and then the reflective boundary. Now, we will introduce these three boundaries respectively.

Now we suppose the rule of CA is $1+1+1$, Then the leftmost cell has no left neighbor, and the rightmost cell has no right neighbor. In this case, the leftmost and rightmost cells in the next transformation will lose their corresponding value. To avoid this, we need to fill their neighbors outside the boundary with certain values. These three boundaries correspond to three different values of their neighbors.

The null boundary is also called the zero boundary, which always fills ``$0$" in the cell outside the configuration. It is a special kind of fixed boundary. If we use a predetermined sequence such as ``$01010...01010$" in these cells, it will be a fixed boundary. The example of a null (fixed) boundary is shown in Figure \ref{Fig2-1}.
\begin{figure}[h]
	\begin{center}
		\scalebox{0.5}{\includegraphics{./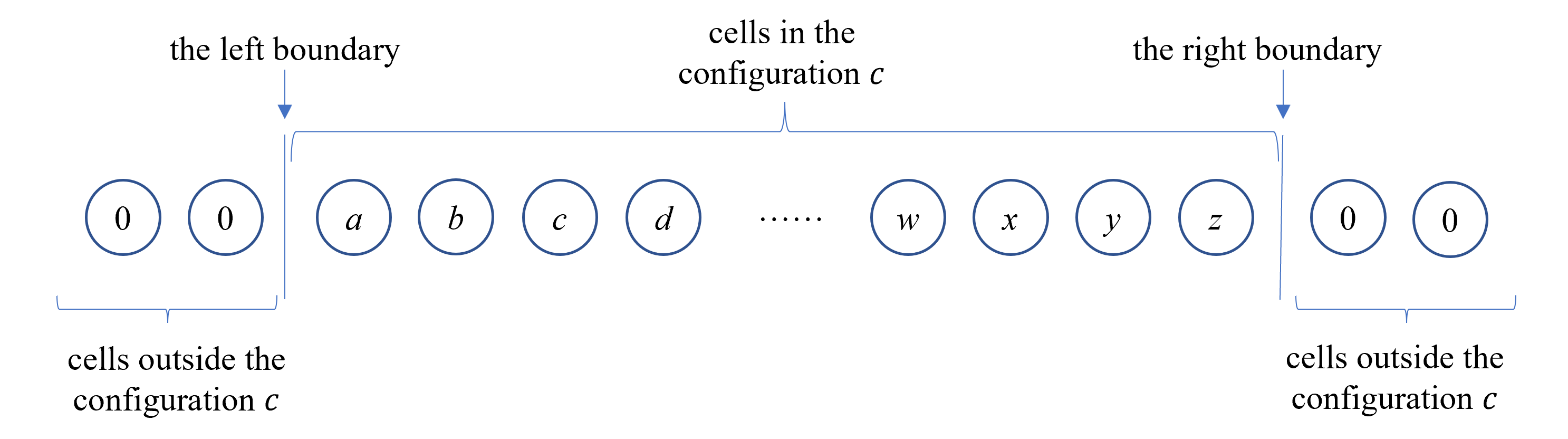}}
	\end{center}
	\caption{the null boundary of 1d CA}
	\label{Fig2-1}
\end{figure}

Periodic boundary regards the configuration as a circle, which fills the cell left-outside the configuration according to the cell in the left of the configuration. The example of the periodic boundary is also shown in Figure \ref{Fig2-2}.

\begin{figure}[h]
	\begin{center}
		\scalebox{0.5}{\includegraphics{./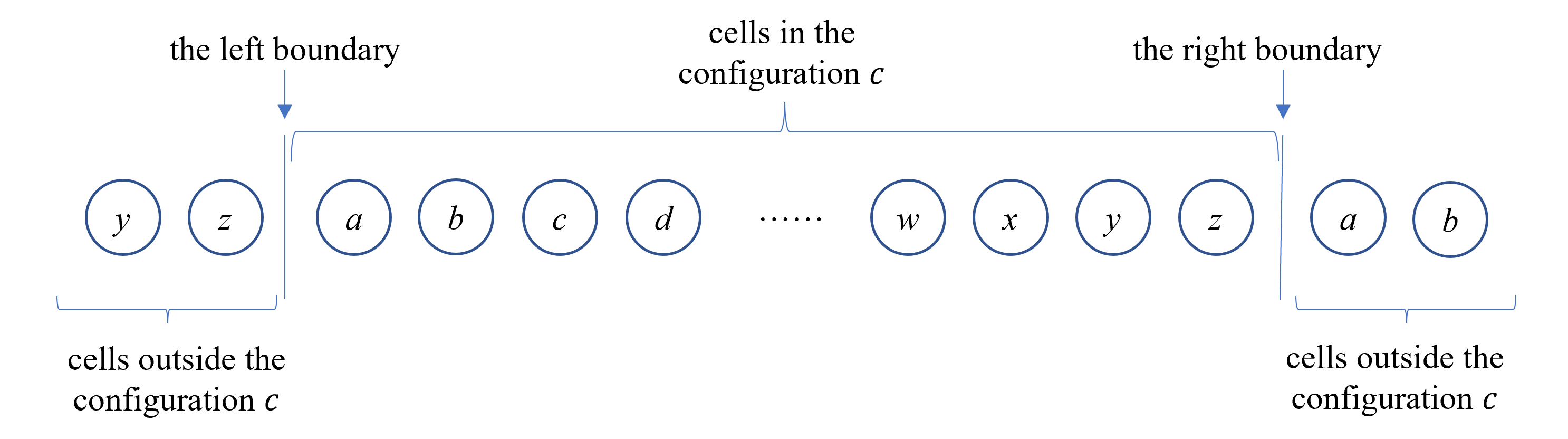}}
	\end{center}
	\caption{the periodic boundary of 1d CA}
	\label{Fig2-2}
\end{figure}

Reflective boundary regards the boundary as a mirror and fills the cell left-outside the configuration according to the cell in the reflective position of the mirror (boundary). We also show the example of a reflective boundary in Figure \ref{Fig2-3}.
\begin{figure}[h]
	\begin{center}
		\scalebox{0.5}{\includegraphics{./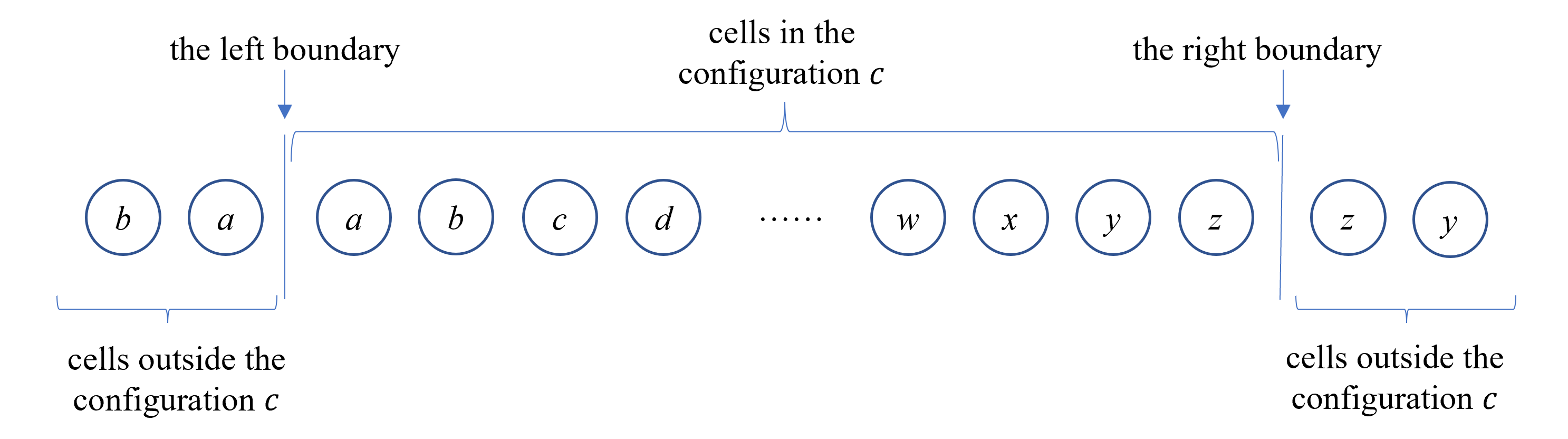}}
	\end{center}
	\caption{the reflective boundary of 1d CA}
	\label{Fig2-3}
\end{figure}

It is noteworthy that if a 1d CA is expressed as ``$L+1+R$", the number of cells we need to replenish is: $L$ on the left and $R$ on the right.

\subsection{Some mathematical expressions and functions \label{s23}}
In this subsection, we will introduce some useful expressions and functions for this article. These definitions are not very important in CA, but they can greatly simplify the expression of the article and make it easier to understand.

Local configuration: We can get a continuous and finite part from an infinite configuration $c$. This finite part can be regarded as a string, called local configuration.

Left (right) side: Because of the limited length, there is a left side and a right side in each local configuration. Supposing a local configuration $c_f=a_1a_2 \cdots a_m$, its left $n$ bits $(n \leq m)$ are $a_1a_2 \cdots a_n$. The same goes for the right. We can use $left_n(c_f)$ to get left $n$ bits of $c_f$, and $right_n(c_f)$ to get right $n$ bits of $c_f$. Figure \ref{Fig2-4} shows an example of this function.
\begin{figure}[h]
	\begin{center}
		\scalebox{0.5}{\includegraphics{./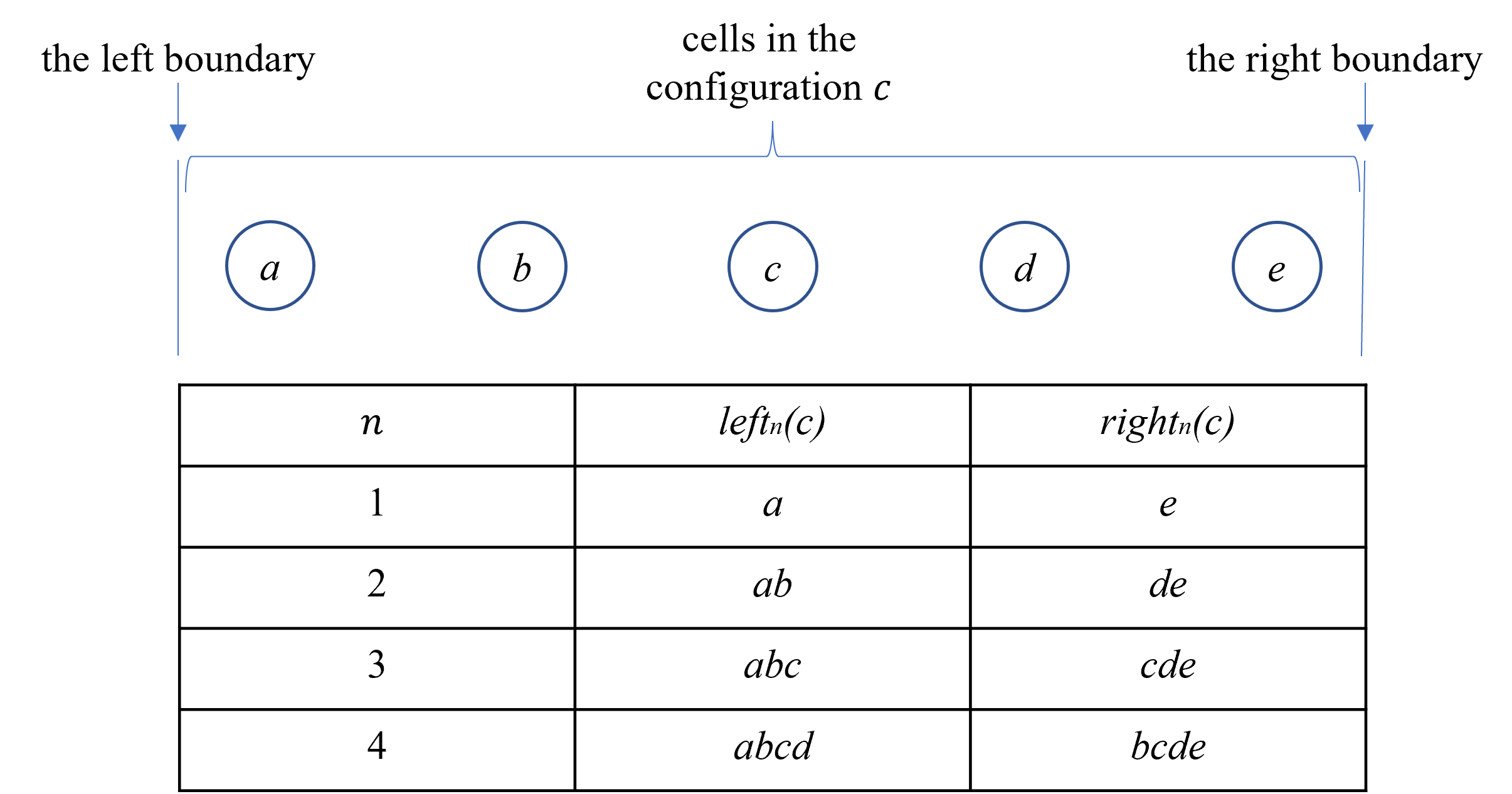}}
	\end{center}
	\caption{an example of function $left_n(c)$ and $right_n(c)$}
	\label{Fig2-4}
\end{figure}

Sequent set: Amoroso has provided the definition of a sequent set to solve the problem of injectivity \cite{1972_Amoroso}. Here we repeat it briefly. On the premise that the neighborhood size is $m$, the right sequent set of local configurations $\alpha$ and $\beta$ is a two-element set $\{\alpha', \beta'\}$ which meets the following conditions:
\begin{itemize}
	\item $right_{m-1}(\alpha) = left_{m-1}(\alpha')$
	\item $right_{m-1}(\beta) = left_{m-1}(\beta')$
	\item $f(\alpha') = f(\beta')$
\end{itemize}
Similarly, the left sequent set of local configurations $\alpha$ and $\beta$ is a two-element set $\{\alpha', \beta'\}$ which meets the following conditions:
\begin{itemize}
	\item $left_{m-1}(\alpha) = right_{m-1}(\alpha')$
	\item $left_{m-1}(\beta) = right_{m-1}(\beta')$
	\item $f(\alpha') = f(\beta')$
\end{itemize}

\section{Surjectivity with different boundaries \label{s3}}
In this section, we first introduce Amoroso's algorithm for surjectivity and its improvement. Then surjectivity decision algorithms with null (fixed) boundary and reflective boundary are given respectively. The surjectivity decision algorithm with the periodic boundary is similar to our new algorithm for global injectivity, so we introduce it in the next section together. Although the examples in this paper is over $\mathbb{F}_2$, all algorithm in this paper is compatible over $\mathbb{F}_P$

\subsection{Amoroso's algorithm for surjectivity and its improvement \label{s31}}
Amoroso's algorithm for surjectivity can be described in brief step by step. And Figure \ref{Fig3-1} shows an example of Amoroso's decision algorithm for surjectivity.
\begin{description}
	\item[Step 1] For a CA with neighborhood size $m$ and state set $S=\{0,1,\ldots, p-1\}$, get the local map table of the rule.
	\item[Step 2] If $f(m)=0$, add $m$ into the root node.
	\item[Step 3] For each node $M$ in layer $i$ $(i \geq 0)$, construct its children $M_0, M_1,\cdots, M_{p-1}$. For each tuple $a_1a_2 \cdots a_{m}$ in $M$ and all $0 \leq j<p$, if $f(a_2a_3 \cdots a_mj)=t$, then $a_2a_3 \cdots a_mj$ is added to $M_t$. If a node is the same as a node that has appeared before, its children will not be constructed. If there are not any m-tuples in a node, then the CA has a Garden-of-Eden which decides the CA is not surjective and the algorithm returns $False$.
	\item[Step 4] If the three steps above are completed and the CA is not decided as non-surjective, then the CA is surjective.
\end{description}

\begin{lemma}
	\label{ts}
	\textbf{The length of tuples in nodes can be reduced from $m$ to $m-1$.}
\end{lemma}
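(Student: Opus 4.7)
The plan is to show that the length-$m$ bookkeeping in each node carries redundant information that the transition rule never actually consults. The critical observation about Step~3 is that the tuple added to a child is $a_2a_3\cdots a_m j$ and the child index is $f(a_2a_3\cdots a_m j)$; in both expressions the leading entry $a_1$ of the parent tuple $a_1a_2\cdots a_m$ is absent. Hence $a_1$ plays only an existential role---certifying that the tuple was produced at the previous layer---while every subsequent decision of the algorithm is determined by the suffix $a_2\cdots a_m$ of length $m-1$.

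Based on this, I would define a reduced version of Amoroso's algorithm that stores $(m-1)$-tuples and redo the three steps accordingly: the root contains every $(m-1)$-tuple $y\in S^{m-1}$ such that $f(zy)=0$ for some $z\in S$, and for each $(m-1)$-tuple $b_1\cdots b_{m-1}$ in a node $M$ and each $j\in S$ we add $b_2\cdots b_{m-1}j$ to the child $M_{f(b_1\cdots b_{m-1}j)}$. Setting $\pi(a_1a_2\cdots a_m)=a_2\cdots a_m$ and extending $\pi$ to sets of tuples, I would then prove by induction on tree depth that the reduced node at any position is exactly the $\pi$-image of the original node at the same position. Since $\pi(M)=\emptyset$ iff $M=\emptyset$, the Garden-of-Eden check fires in the two algorithms at exactly the same positions, and so they must return the same surjectivity verdict.

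The step I expect to be the main obstacle is handling the \emph{stop when a node coincides with one already seen} rule: the reduced algorithm declares a repetition under a strictly weaker equivalence (equal $(m-1)$-projections rather than equal $m$-tuple sets), so it can in principle halt sooner than the original and one must argue no Garden-of-Eden is missed. To justify soundness I would observe that because every transition depends only on $(m-1)$-projections, the subtrees rooted at two nodes with equal projections generate identical projection-labelled subtrees layer by layer; therefore any Garden-of-Eden reachable below one is reachable below the other, and cutting off exploration on projection repetition cannot hide one. Finiteness of the number of distinct sets of $(m-1)$-tuples then gives termination of the reduced procedure, completing the equivalence.
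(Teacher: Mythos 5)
Your proposal is correct and follows essentially the same route as the paper's own proof: you show that the last-$(m-1)$-bit projection commutes with the tree construction, so simplified nodes are exactly the projections of the original nodes at corresponding positions and emptiness is preserved in both directions. The only difference is that you explicitly justify the \emph{repeated-node cutoff} under the weaker projection equivalence (via identical projection-labelled subtrees), a subtlety the paper's proof passes over silently, so your version is if anything slightly more complete.
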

\begin{proof}
	To simplify this algorithm, we need to guarantee the algorithm that the algorithm can still run correctly. So we need to prove three things.
	\begin{itemize}
		\item First, the configuration c and its original image can still correspond correctly.
		\item Second, the place where the original algorithm has empty nodes is still empty after simplification.
		\item Third, the non-empty nodes of the original algorithm are still non-empty after improvement.
	\end{itemize}
	
	It is very easy to prove the first point. Even if the tuple length is reduced from $m$ to $m-1$, we can still use the extra bit of tuples in its children to obtain all parameters required by $f$. The application of the de Bruijn graph in CA uses the same principle \cite{1991_Sutner}.
	
	Secondly, we only retain the last $m-1$ bits of each tuple of the root node and remove the same tuples generated thereby. The number of tuples of the last $m-1$ bits contained in the simplified root node is the same as the original node. If all nodes are constructed in the same way, the number of tuples of the last $m-1$ bits contained in its child node is still the same as the original node in the corresponding position. If a node is empty, the node at the same location should contain all the last $m-1$ bits of the node, that is, 0, and it will also be empty. Similarly, if a node is not an empty node, the node at the same location is also not empty.
\end{proof}

\begin{figure}[h]
	\begin{center}
		\scalebox{0.35}{\includegraphics{./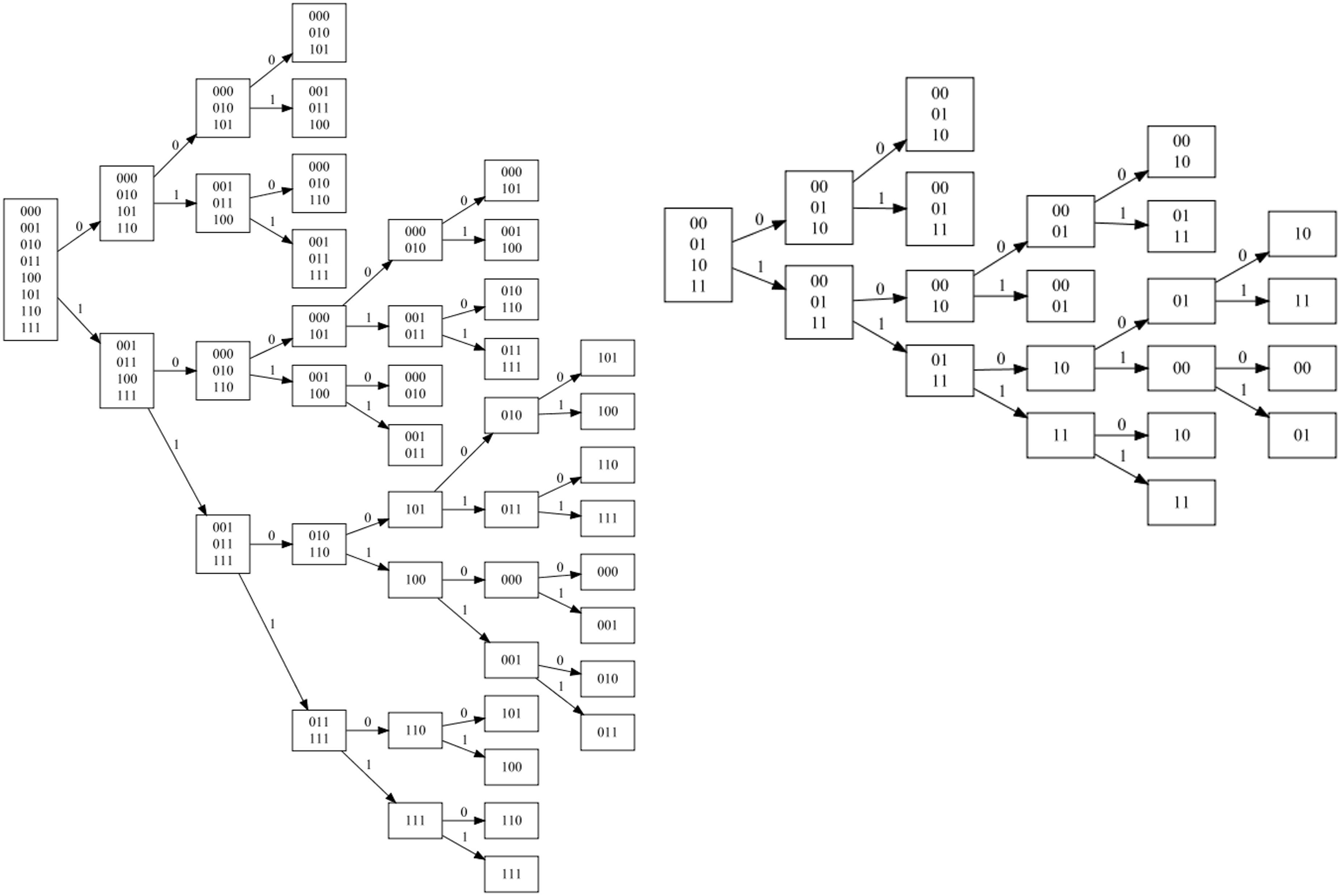}}
	\end{center}
	\caption{the simplification of Amoroso's algorithm for surjectivity (Amaroso's on the left and our simplified version on the right}
	\label{Fig3-1}
\end{figure}
This lemma seems to simply shorten the tuple length from $m$ to $m-1$. However, it decreases the complexity of nodes from $2^{p^m}$ to $2^{p^{m-1}}$, which differs $2^{(p-1)*p^{m-1}}$ times. For rules without Garden-of-Eden, the running time of the algorithm will be much shorter. Figure \ref{Fig3-1} and Algorithm \ref{a1} shows an example of this improvement. We can find that the structure of the tree has been greatly simplified.

\begin{algorithm}[h]
	\small
	\SetAlgoLined
	\label{a1}
	\KwData{local rule $f$}
	\KwResult{whether the CA is surjective}
	let $L$ be an empty queue\;
	let $S_{root} \leftarrow \{a_1...a_{m-1} | a_1,...,a_{m-1} \in \{0,1,...,p-1\}\}$ be the root set\;
	push $S_{root}$ to the back of $L$\;
	\While{$L$ is not empty}{
		remove the headset in $L$ and mark it as $S_{current}$\;
		\If{$S_{current}$ is empty}{
			return the CA is not surjective\;
		}
		\If{$S_{current}$ is a new set}{
			let $S_0$,$S_1$,...,$S_{p-1}$ be empty sets\;
			\For{each $a_1a_2...a_{m-1} \in S_{current}$, each $d \in \{0,1,...,p-1\}$ and each $b \in \{0,1,...,p-1\}$}{
				\If{$f(a_1...a_{m-1}d) = b$} {
					add $a_2...a_{m-1}d$ into $S_b$\;
				}
			}
			
			push $S_0$,$S_1$,...,$S_{p-1}$ to the back of $L$\;
		}
	}
	return the CA is surjective\;
	\caption{decision algorithm for surjectivity}
\end{algorithm}

\subsection{Surjectivity with the fixed boundary \label{s32}}
The overall change between global surjectivity and surjectivity null boundary is similar, the difference is that two boundaries are added (left and right). In the global surjectivity algorithm, we use the leftmost bit in the root node to represent the leftmost cell in the configuration, and the rightmost bit in the leaf nodes to represent the rightmost cell. According to our introduction to the boundary in Subsection \ref{s22}, the process we fill the neighbors of the cell near the boundaries is equal to the limitation to the tuple in the root node and leaf nodes. It is noteworthy that our improvement to the algorithm for surjectivity in the previous subsection produces a marked effect.
\begin{definition}
	Given a CA with neighborhood size $L+1+R$ and the kind of fixed boundary $b_l$ and $b_r$ ($b_l=b_r = 000...000$), the initial set ($s_i$) is the set of $(L+R)$-tuples $t$ which satisfies the following conditions:
	\begin{itemize}
		\item $left_l(t) = b_l$
	\end{itemize}
	the terminal set ($s_t$) is the set of $(L+R)$-tuples $t$ which satisfies the following conditions:
	\begin{itemize}
		\item $left_r(t) = b_r$
	\end{itemize}
\end{definition}
\begin{figure}[h]
	\begin{center}
		\scalebox{0.7}{\includegraphics{./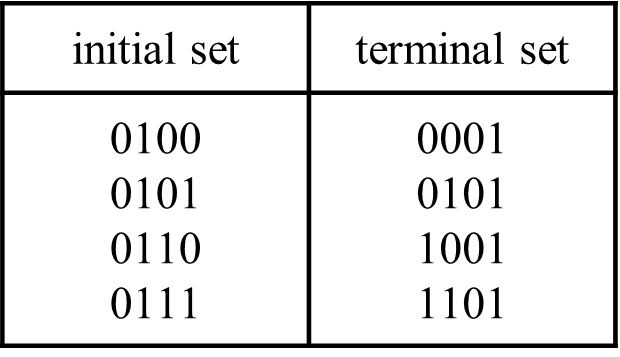}}
	\end{center}
	\caption{the initial set and terminal set for CA with neighborhood size 2+1+2 (fixed boundary 010...10)}
	\label{Fig3-2}
\end{figure}

An example of the initial set and terminal set is shown in Figure \ref{Fig3-2}. We can use this two to determine the surjectivity with the fixed boundary.
In the line 2 of Algorithm \ref{a1}, we have:
\begin{itemize}
	\item let $S_{root} \leftarrow \{a_1...a_{m-1} | a_1,...,a_{m-1} \in \{0,1,...,p-1\}\}$ be the root set
\end{itemize}
Now we modify it into:
\begin{itemize}
	\item let $S_{root} \leftarrow s_i$ be the root set
\end{itemize}
In the line 6 of Algorithm \ref{a1}, we have:
\begin{itemize}
	\item If $S_{current}$ is empty
\end{itemize}
Now we modify it into:
\begin{itemize}
	\item If $S_{current} \setminus s_t = \emptyset$
\end{itemize}

Now, we have a new algorithm to determine the surjectivity with the fixed boundary. Its complexity is equal to the global surjectivity algorithm, and its actual running time is less. Because the reduction of the number of tuples in the root node will reduce the scale of the tree, and the existence of the terminal set will also lead to the earlier termination of the algorithm. Table \ref{table1} lists all surjective CA with neighborhood sizes less than 5 (fixed boundary). In this table, we can find the setting of the boundary affects the surjectivity of the CA little.

\subsection{Surjectivity with the reflective boundary \label{s33}}
It seems like we can use the same way in the reflective boundary. However, for the reflective boundary, we can get the initial set and the terminal set when $L = R$. (The neighborhood size of this CA is $L+1+R$.) Because the initial set and the terminal set are composed of ($L+R$)-tuples. The cell outside the boundary is reflected by those near the boundary. According to Subsection \ref{s22}, we need $L$ extra cells on the left and $R$ cells on the right. If we set the initial set and the terminal set in the same way as the algorithm in the previous subsection, the length of the tuples in the initial set should be $2L$ and the length of the tuples in the terminal set should be $2R$. If $L \neq R$, the algorithm will not work successfully.
An additional algorithm is needed to solve this problem. We can make a preprocess and change the expression of the rule to make $L=R$.
We have introduced the rule $f$ of CA is $S^m \rightarrow S$. Now we suppose:
\begin{itemize}
	\item $m = L+1+R$
	\item $L=R+1$
	\item $f(x_{-L},x_{-L+1},...,x_{-1},x_0,x_1,...x_R) = x_0'$ is the rule.
\end{itemize}
We can add an extra parameter $x_{R+1}$ and keep the following equation regardless of the value of $x_{R+1}$. Now we have a symmetrical rule. We call this the extension of the rule.
\begin{equation}
	\begin{aligned}
		f(x_{-L},x_{-L+1},...,x_{-1},x_0,x_1,...x_R) = & f'(x_{-L},x_{-L+1},...,x_{-1},x_0,x_1,...x_R,x_{R+1})
	\end{aligned}
\end{equation}
It's easy to extend a rule. We need to discuss according to $L$ and $R$ which is bigger. Here we use the Wolfram number to express the local rule $f$ \cite{1984_Wolfram}.
\begin{itemize}
	\item If $L = R+1$ Duplicate each number behind itself (duplicate bit by bit), eg. $00011101 \rightarrow 0000001111110011$
	\item If $R = L+1$ Duplicate the function behind itself (duplicate entirely), eg. $00011101 \rightarrow 0001110100011101$
\end{itemize}
\begin{figure}[h]
	\begin{center}
		\scalebox{0.7}{\includegraphics{./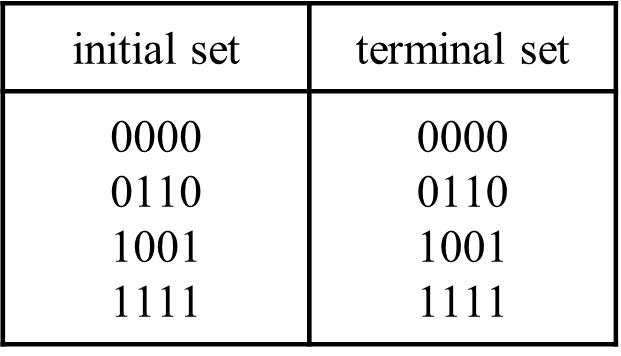}}
	\end{center}
	\caption{the initial set and terminal set for CA with neighborhood size 2+1+2 (reflective boundary)}
	\label{Fig3-3}
\end{figure}
\begin{algorithm}[h]
	\SetAlgoLined
	\label{a2}
	\KwData{the local rule $f$ (a Wolfram number)}
	\KwResult{a symmetrical rule $f'$ (a Wolfram number)}
	Initialize $f' \leftarrow f$;
	
	\While {$L > R$}{
		duplicate $f'$ bit by bit;\\
		$L \leftarrow L-1$;
	}
	\While {$R > L$}{
		duplicate $f'$ entirly;\\
		$R \leftarrow R-1$;
	}
	return $f'$;
	\caption{The extension of a rule}
\end{algorithm}

Now we have completed the operation of extending the rule by one bit. Now we give the algorithm of transforming the rule to $L=R$ in Algorithm \ref{a2}. An example of the initial set and terminal set is shown in Figure \ref{Fig3-3} and the results are shown in table \ref{table1}.

\begin{table}[h]
	\caption{number of surjective CA under different boundary conditions}
	\label{table1}
	\begin{tabular}{ccc}
		\hline
		neighborhood size & boundary set & rule number \\
		\hline
		3 & null boundary & 6\\

		3 & reflective boundary & 2 \\

		3 & periodic boundary & 6\\

		4 & null boundary & 34 \\

		4 & fixed boundary & 34 \\

		4 & reflective boundary & 2\\

		4 & periodic boundary & 16 \\

		\hline
	\end{tabular}
\end{table}

\section{Injectivity tree algorithm \label{s4}}
\subsection{Introduction of Amoroso's algorithm}

Because it is hard to show Amoroso's algorithm for injectivity completely, here we describe it in brief for each step and one of his figures is shown below. This algorithm requires that the rules are balanced.
\begin{description}
	\item[Step 1] Classify all $m$-tuples into different classes according to the local mapping rule $f$. Each class has $p^{m-1}$ tuples.
	\item[Step 2] Construct the sequent table for each class with the first $p^{m-1}-1$ elements of the class as the abscissa axis and the last $p^{m-1}-1$ elements as the ordinate axis.
	\item[Step 3] Search sequent sets for each box in the sequent table. If one of its sequent sets is itself, then the CA is not injective. If the right $m-1$ bits of two $m$-tuples in the box are the same, put $\bigotimes$ into the box. The box is crossed-out if there is neither a sequent set nor $\bigotimes$ in the box.
	\item[Step 4] If a box is crossed-out, it cannot be a sequent set of other boxes. Delete such sequent sets in all boxes and then there will generate new cross-out boxes. Iterate this process as far as possible until there is no change.
	\item[Step 5] Assign values to all non-crossed-out boxes. A symbol $\bigotimes$ is assigned weight 0. If sequent sets of a box are $W_1, W_2, W_3, \cdots, W_n$, the box is assigned weight $1 + max (W_1, W_2, W_3, \cdots, W_n)$. If there are any non-assigned boxes left, the CA is not injective.
	\item[Step 6] After step 5, check all non-crossed-out boxes. If the left $m-1$ bits of the two $m$-tuples in the box are the same, then the CA is non-injective. If the above steps cannot decide that the CA is non-injective, it is injective.
\end{description}

\begin{figure}[h]
	\begin{center}
		\scalebox{0.5}{\includegraphics{./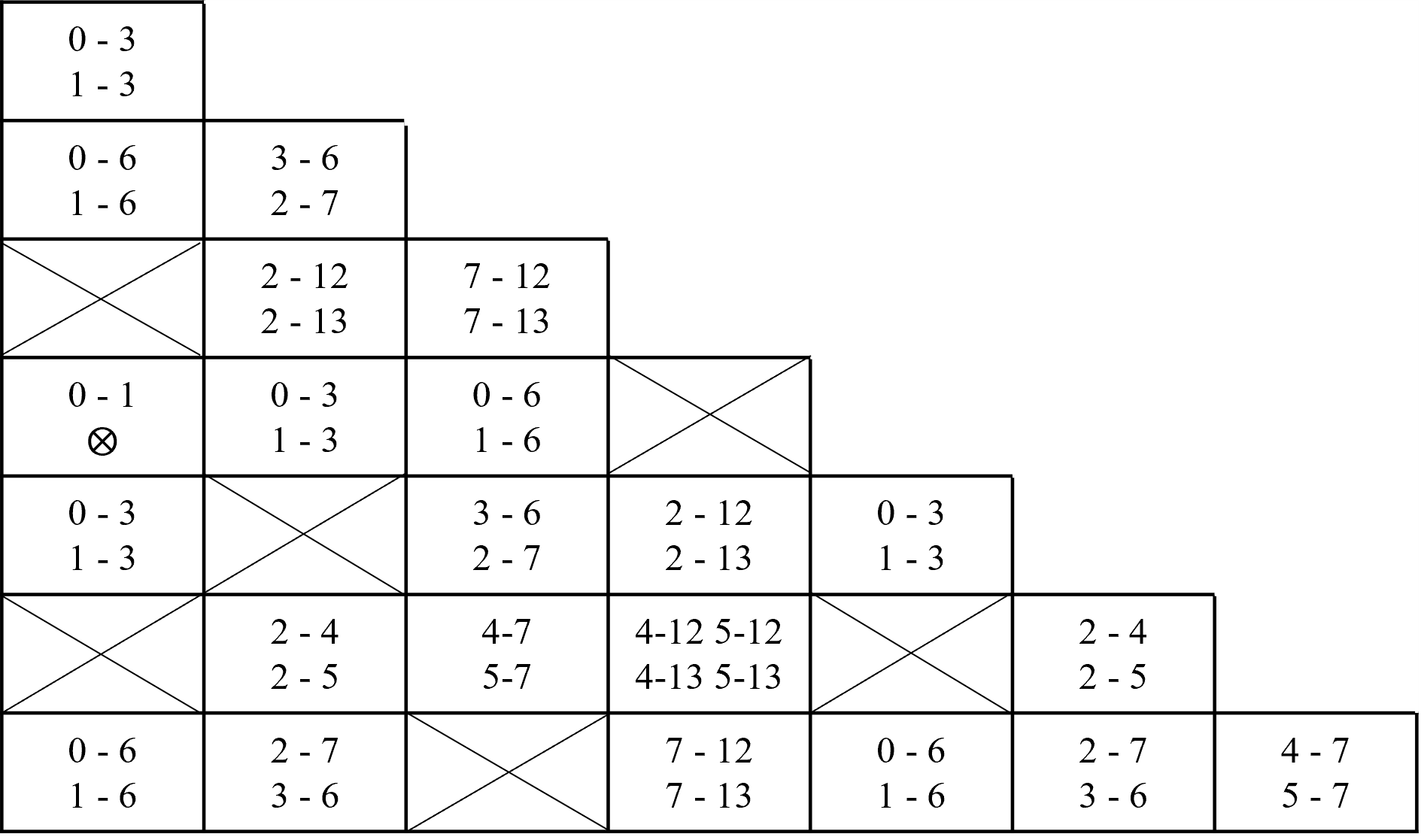}}
	\end{center}
	\caption{the step 3 of Amoroso's algorithm for injectivity}
	\label{a}
\end{figure}

Amoroso has given a theorem along with its proof for the injectivity of CA. The theorem is briefly described as follows. If two configurations have the same successor, their values of automata must be the same after a certain distance which is equivalent to injectivity. Besides, the algorithm has six steps in total and judges whether a CA is injective in three positions. Compared with the theorem, this algorithm is somewhat complex. Although we use hashmap to optimize step 4 greatly, the running complexity is still high.

It is still possible to optimize Amoroso's decision algorithm for injectivity. However, the decision algorithm for surjectivity and the thought of the sequent set given by Amoroso have great reference values. The decision algorithm for surjectivity has excellent efficiency, and the sequent set seems to be the only way to solve the injectivity problem. Therefore, if we can use the tree structure to decide the injectivity, it will not only be easier to understand but also have a remarkable breakthrough in efficiency.

Amoroso's algorithm needs to traverse all sequent sets, which is found unnecessary after our research. After the iterative deletion of crossed-out boxes, the remaining sequent sets will be very few compared with the initial. That is, there may be very few non-injective replaceable sequent sets. When we use the tree structure, the corresponding configurations in the same node have the same successor, so we only need to decide whether the tuples in the nodes can lead to many-for-one. But correspondingly, the surjectivity tree cannot complete this work without restructuring. The tree structure used in the new algorithm for injectivity is different from that in the algorithm for surjectivity.

In addition, Amoroso's theorem for injectivity mentions "a distance". However, this distance is quite long, as long as $(^{2^{m-1}}_2)$, which is impossible for trees to construct. Therefore, to match the tree structure, it is necessary to find and summarize a new theorem for injectivity, which can play a role in the finite but irregular structure of the tree and complete the decision for injectivity.

\subsection{Theorems for injectivity}
\begin{definition}
	If two local configurations $\alpha$ and $\beta$ meet the following conditions, then $\alpha$ and $\beta$ are \textbf{replaceable local configurations}.
	\begin{itemize}
		\item $\alpha$ and $\beta$ contain no less than $2m-1$ cells
		\item $\alpha \neq \beta$
		\item $left_{m-1}(\alpha) = left_{m-1}(\beta)$
		\item $right_{m-1}(\alpha) = right_{m-1}(\beta)$
		\item $\alpha$ and $\beta$ have the same successor according to the local rule $f$
	\end{itemize}
\end{definition}

\begin{lemma}
	\label{l1}
	For a CA with neighborhood size $m$, if two configurations $c_1$ and $c_2$ contain \textbf{replaceable local configurations} $\alpha$ and $\beta$ respectively, then this CA is not injective.
\end{lemma}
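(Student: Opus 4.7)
The plan is to produce, from $c_1$ alone, a second configuration $c_2'$ with $\tau(c_1) = \tau(c_2')$ and $c_1 \neq c_2'$; this directly witnesses non-injectivity, so the configuration $c_2$ named in the hypothesis plays no essential role beyond guaranteeing that $\beta$ is a legal block to splice in. In other words, the real content of the lemma is that the mere existence of a replaceable pair $(\alpha,\beta)$ forces two distinct preimages of the same image.

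Suppose $\alpha$ occupies positions $i, i+1, \ldots, i+n-1$ of $c_1$, where $n \geq 2m-1$ is the common length of $\alpha$ and $\beta$ (equal length is forced by the ``same successor'' condition, since the local image of a length-$n$ block has length $n-m+1$). Define $c_2'$ to agree with $c_1$ outside this interval and to carry $\beta$ on it. Since $\alpha \neq \beta$, the two configurations disagree at some cell in the range $[i+m-1,\,i+n-m]$; the boundary cells $[i,\,i+m-2]$ and $[i+n-m+1,\,i+n-1]$ coincide, because $\alpha$ and $\beta$ share their first $m-1$ and last $m-1$ symbols.

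To check $\tau(c_1)(j) = \tau(c_2')(j)$ for every $j \in \mathbb{Z}$, I would split on where the length-$m$ neighborhood of $j$ lies relative to the interval $[i, i+n-1]$. When the neighborhood is disjoint from the interval, $c_1$ and $c_2'$ coincide on every cell it touches, so the two $\tau$-values agree. When the neighborhood is entirely contained in the interval, the ``same successor'' hypothesis on $\alpha,\beta$ gives equality directly. When the neighborhood straddles an endpoint of the interval, every cell of the neighborhood that lies inside the block belongs to one of the two $(m-1)$-bit ends on which $\alpha$ and $\beta$ agree, so all $m$ entries of the neighborhood are identical in $c_1$ and $c_2'$, and the two $\tau$-values coincide again. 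The assumption $n \geq 2m-1$ is precisely what prevents a single length-$m$ neighborhood from straddling both endpoints at once, so the three cases are exhaustive and mutually exclusive.

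The main obstacle is not a conceptual one but a bookkeeping one: pinning down these three cases and verifying that $n \geq 2m-1$ is exactly the separation threshold which lets the ``same successor'' clause cover the interior while the ``matching $(m-1)$-bit boundaries'' clauses cover the two straddle positions without overlap. Once the case split is in place, each case is a one-line appeal to one of the five defining conditions of replaceability, and the construction of $c_2'$ immediately certifies that $\tau$ is not injective.
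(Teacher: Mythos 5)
Your proposal is correct and follows essentially the same route as the paper: both arguments build two configurations that agree everywhere outside the block and carry $\alpha$ and $\beta$ respectively on it (the paper pads with $0$s, you splice $\beta$ into $c_1$), and both verify $\tau$-agreement by letting the matching $(m-1)$-bit ends handle neighborhoods that straddle the block boundary and the same-successor condition handle neighborhoods inside it. Your three-case analysis merely makes explicit what the paper dismisses as ``easy to prove.''
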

\begin{proof}
	After getting the two local configurations $\alpha$ and $\beta$, we can fill ``$0$" at all other locations and get two configurations $c_1$ and $c_2$. It is obvious that $c_1 \neq c_2$. Because $\alpha$ and $\beta$ are replaceable, they have the same successor. As shown in Figure \ref{Fig5-1}, $a_i$ and $b_i$ are the $m-1$ bits of left side and right side of $\alpha$ and $\beta$, $x_i$ and $y_i$ $(i \in \mathbb{Z}_+)$ are mutually replaceable parts. Then it is easy to prove $\tau(c_1)=\tau(c_2)$, so the CA is not injective.
\end{proof}

\begin{figure}[h]
	\begin{center}
		\scalebox{0.07}{\includegraphics{./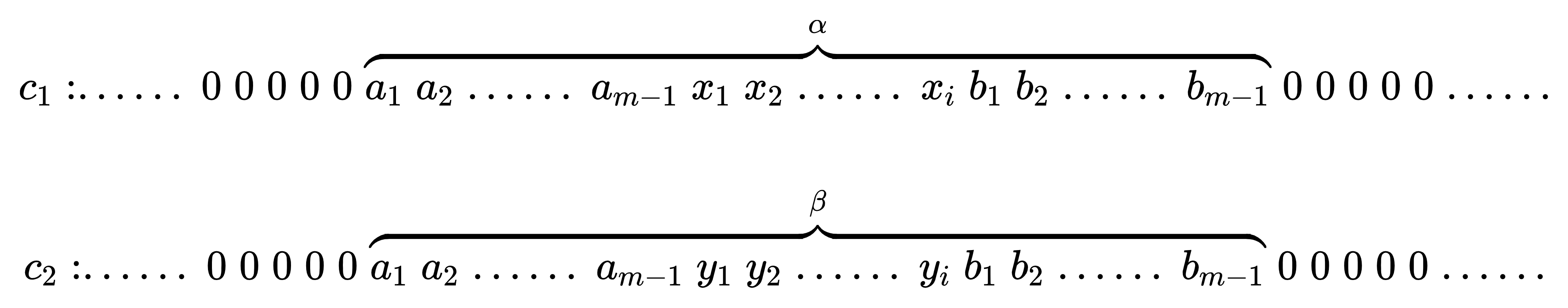}}
	\end{center}
	\caption{replaceable local configurations}
	\label{Fig5-1}
\end{figure}

\begin{definition}
	If two local configurations $\alpha$ and $\beta$ meet the following conditions, then $\alpha$ and $\beta$ are \textbf{periodic local configurations}.
	\begin{itemize}
		\item $\alpha$ and $\beta$ contain no less than $m$ cells
		\item $\alpha \neq \beta$
		\item $left_{m-1}(\alpha) = right_{m-1}(\alpha)$
		\item $left_{m-1}(\beta) = right_{m-1}(\beta)$
		\item $\alpha$ and $\beta$ have the same successor according to the local rule $f$
	\end{itemize}
\end{definition}

\begin{lemma}
	\label{l2}
	For a CA with neighborhood size $m$, if two configurations $c_1$ and $c_2$ contain \textbf{periodic local configurations} $\alpha$ and $\beta$ respectively, then this CA is not injective.
\end{lemma}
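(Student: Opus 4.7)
Following the model of Lemma~\ref{l1}, the goal is to exhibit two distinct configurations $c_1\neq c_2$ with $\tau(c_1)=\tau(c_2)$. The condition $left_{m-1}(\alpha)=right_{m-1}(\alpha)$ supplies a clean ``seam'': if I discard the last $m-1$ cells of $\alpha$ to obtain a block $P_\alpha$ of length $p=\ell-m+1$ (where $\ell$ is the common length of $\alpha$ and $\beta$, which must agree because their successors do), then repeating $P_\alpha$ bi-infinitely reproduces $\alpha$ as a substring without any conflict at the joins. I would therefore take $c_1$ to be the bi-infinite repetition of $P_\alpha$ and $c_2$ the bi-infinite repetition of $P_\beta$. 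That $c_1\neq c_2$ is immediate: $\alpha$ can be recovered from $P_\alpha$ (append the first $m-1$ bits of the next copy), so $\alpha\neq\beta$ forces $P_\alpha\neq P_\beta$.

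The heart of the argument is showing $\tau(c_1)=\tau(c_2)$. Both configurations are $p$-periodic, so their successors are $p$-periodic as well, and it suffices to check equality on a single period. I plan to prove that for every offset $0\leq i\leq p-1$, the $m$-window of $c_1$ beginning at position $i$ coincides with the $m$-window of $\alpha$ at position $i$. When the window lies inside a single copy of $P_\alpha$ this is immediate. For the straddling offsets the window consists of the tail of one copy of $P_\alpha$ followed by the head of the next, and the required agreement reduces to checking that the first $j$ bits of $P_\alpha$ equal bits $p,\ldots,p+j-1$ of $\alpha$ for $j\leq m-1$; this is precisely the hypothesis $left_{m-1}(\alpha)=right_{m-1}(\alpha)$. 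Consequently $\tau(c_1)$ on one period records exactly the successor of $\alpha$, and by symmetry $\tau(c_2)$ on one period is the successor of $\beta$. Since $\alpha$ and $\beta$ share their successor by hypothesis, $\tau(c_1)=\tau(c_2)$ and the CA is not injective.

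The main obstacle is the straddling-window verification; the rest is routine bookkeeping. A minor degenerate case arises when $p<m-1$ (that is, $\ell$ close to $m$): the closure condition then forces $\alpha$ to be a constant string, $c_1$ collapses to a constant configuration, and the argument simplifies to the observation that $f(a^m)=f(b^m)$ with $a\neq b$ already yields distinct configurations with identical successor.
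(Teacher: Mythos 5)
Your construction is essentially the one the paper uses: it likewise truncates $\alpha$ and $\beta$ by $m-1$ cells (dropping the left $m-1$ bits rather than the right, the mirror image of your choice) and tiles the resulting blocks periodically to produce $c_1\neq c_2$ with $\tau(c_1)=\tau(c_2)$; the paper asserts the verification is easy and points to a figure, whereas you carry out the window-by-window check explicitly. One small correction: in your degenerate case $p<m-1$, the overlap condition $left_{m-1}(\alpha)=right_{m-1}(\alpha)$ forces $\alpha$ to be $p$-periodic, not constant (take $m-1=4$ and $\alpha=ababab$, so $p=2$), so the fallback to comparing $f(a^m)$ with $f(b^m)$ is not available in general. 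Fortunately no separate case is needed: iterating the same prefix--suffix identity shows that $\alpha$ agrees with the $p$-periodic extension of its first $p$ symbols for every $p\geq 1$, after which your straddling-window argument applies verbatim.
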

\begin{proof}
	We can get local configurations $\alpha'$ and $\beta'$ from $\alpha$ and $\beta$ respectively by removing their left $m-1$ bits. Now, we can construct two configurations $c_1$ and $c_2$, as shown in Figure \ref{Fig5-2}. It is easy to prove $\tau(c_1)= \tau(c_2)$, so the CA is not injective.
\end{proof}

\begin{figure}[h]
	\begin{center}
		\scalebox{0.35}{\includegraphics{./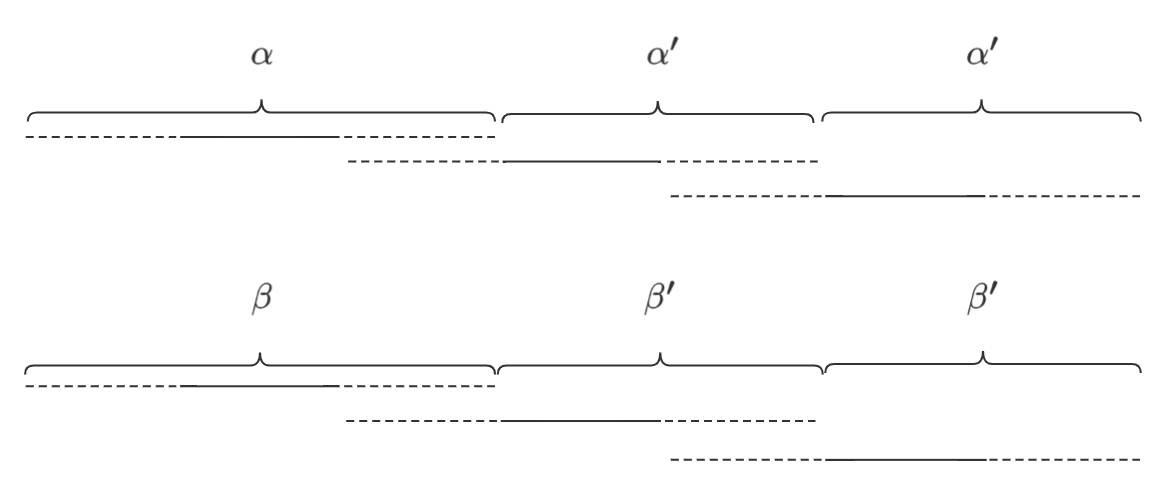}}
		\scalebox{0.05}{\includegraphics{./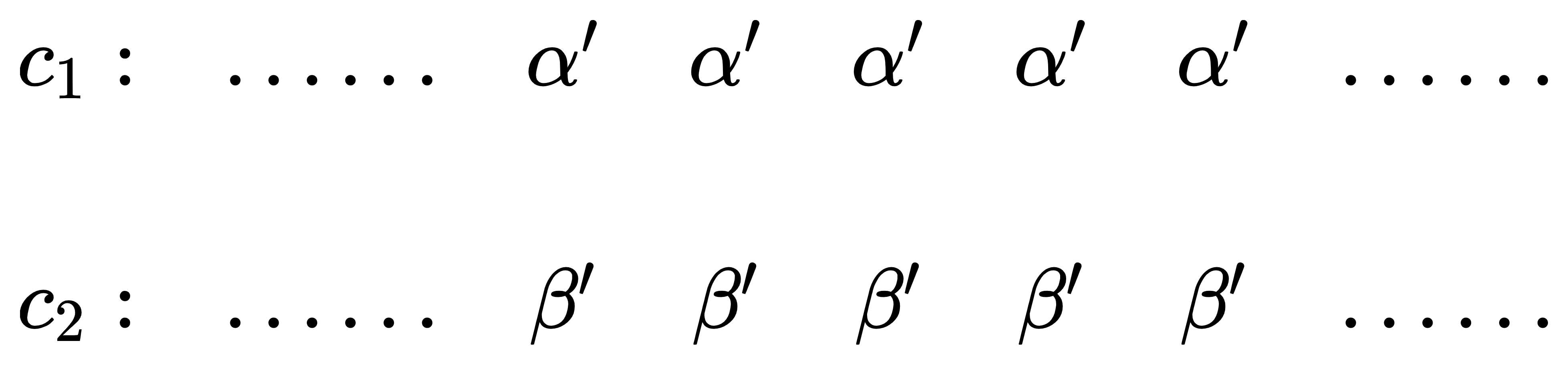}}
	\end{center}
	\caption{periodic local configurations}
	\label{Fig5-2}
\end{figure}

\begin{lemma}
	\label{l3}
	For a CA with neighborhood size $m$, if any two configurations $c_1$ and $c_2$ contain \textbf{neither replaceable nor periodic local configurations}, then this CA is injective.
\end{lemma}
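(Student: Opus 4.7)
The plan is to prove the contrapositive: if the CA fails to be injective, then two configurations must contain either a replaceable or a periodic local configuration, which will close the converse of Lemmas \ref{l1} and \ref{l2}. Concretely, I would start from any $c_1 \neq c_2$ with $\tau(c_1) = \tau(c_2)$, set $D = \{i \in \mathbb{Z} : c_1(i) \neq c_2(i)\}$, and split on whether $D$ is bounded.

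First, suppose $D$ is bounded. Let $a = \min D$ and $b = \max D$, and define $\alpha = c_1[a - m + 1,\, b + m - 1]$ and $\beta = c_2[a - m + 1,\, b + m - 1]$. Since $c_1$ and $c_2$ agree outside $[a, b]$, the leftmost $m-1$ and rightmost $m-1$ cells of $\alpha$ and $\beta$ coincide; they differ at position $a$; their length $b - a + 2m - 1$ is at least $2m - 1$; and because $\tau(c_1) = \tau(c_2)$ holds globally, evaluating the local rule on every $m$-window inside these strings yields $f(\alpha) = f(\beta)$. Thus $(\alpha, \beta)$ is a replaceable pair, contradicting the hypothesis.

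Second, suppose $D$ is unbounded, say on the right (the left-unbounded case is symmetric). Pick an infinite increasing sequence $i_1 < i_2 < \cdots$ in $D$, and attach to each $i_k$ the pair $(c_1[i_k,\, i_k + m - 2],\, c_2[i_k,\, i_k + m - 2])$. Because there are at most $p^{2(m-1)}$ such pairs, pigeonhole produces indices $i_u < i_v$ with identical attached pairs. Setting $\alpha = c_1[i_u,\, i_v + m - 2]$ and $\beta = c_2[i_u,\, i_v + m - 2]$ gives two strings of length $i_v - i_u + m - 1 \geq m$ whose leftmost and rightmost $(m-1)$-blocks coincide inside each string; they differ at position $i_u \in D$; and $f(\alpha) = f(\beta)$ again from the global equality of images restricted to this window. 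So $(\alpha, \beta)$ is a periodic pair, contradicting the hypothesis.

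The delicate step is the pigeonhole in the unbounded case: the sample positions $i_k$ must be drawn from $D$ itself rather than merely near it, so that the inequality $\alpha \neq \beta$ is automatic once two matching windows are located. Checking $f(\alpha) = f(\beta)$ in both cases reduces to the global hypothesis $\tau(c_1) = \tau(c_2)$ restricted to a finite window, but the bookkeeping on the offsets $L$ and $R$ around each window should be carried out carefully so that the constructed strings are long enough to span the differing region together with the $(m-1)$-cell boundary padding that both definitions require.
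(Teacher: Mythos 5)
Your proof is correct, but it is organized quite differently from the paper's. The paper argues forward: it starts from an arbitrary pair of distinct $(m-1)$-blocks and tries to grow two configurations with equal images by repeatedly taking left and then right sequent sets, using the absence of periodic pairs to forbid repetition in each chain and the absence of replaceable pairs to forbid the right chain from closing off, concluding that no such pair of configurations can be completed. You instead take the contrapositive: given $c_1\neq c_2$ with $\tau(c_1)=\tau(c_2)$, you look at the difference set $D$ and split on whether it is bounded. A bounded $D$, padded by $m-1$ cells on each side, hands you a replaceable pair directly; an unbounded $D$ yields a periodic pair by pigeonhole on the at most $p^{2(m-1)}$ pairs of $(m-1)$-windows anchored at points of $D$. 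The two arguments run on the same combinatorial fuel (finiteness of the set of $(m-1)$-window pairs, which is exactly the finiteness of sequent sets the paper invokes), but your extraction-from-a-witness formulation is tighter: it makes explicit why every failure of injectivity is covered (the paper's ``randomly select two different local configurations of length $m-1$'' leaves implicit how an arbitrary counterexample reduces to such a seed, and does not address the possibility that a sequent-set chain simply terminates rather than reaching matching ends), and your observation that the sample points must be taken inside $D$ so that $\alpha\neq\beta$ comes for free is precisely the detail that keeps the pigeonhole step honest. The cost is only the routine bookkeeping you already flag, namely checking that each extracted window is long enough to contain the $(m-1)$-cell padding demanded by the two definitions.
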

\begin{proof}
	Randomly select two different local configurations $\alpha$ and $\beta$ with the length of $m-1$. Search the left sequent set $T_{-1} = \{\gamma_{-1}, \delta_{-1}\}$ of $\alpha$ and $\beta$ and get new local configurations $\alpha'$ and $\beta'$. Keep looking for a series of left sequent sets $T_{-1}, T_{-2} ,\cdots, T_{-n}$. There are not any periodic local configurations, so we have $T_{-i} \neq T_{-j}$ $(0 < i < j \leq n)$. There are at most $2*(^{2^{m-1}}_2)$ different left sequent sets $T$ in total which are finite. So after looking for at most $2*(^{2^{m-1}}_2)+1$ left sequent sets, there must be $T_{-i}=\{\gamma_{-i}, \delta_{-i}\}$ and $left_{m-1}(\gamma_{-i})=left_{m-1}(\delta_{-i})$. Keep looking for a series of right sequent sets $T_1, T_2 ,\cdots, T_r$. There will not exist $T_i=\{\gamma_i, \delta_i\}$ that $right_{m-1}(\gamma_i)=right_{m-1}(\delta_i)$, otherwise we can find replaceable local configurations. There will also not exist $T_i = T_j$ $(0 < i < j \leq r)$, otherwise we can find periodic local configurations. Now There are at most $2*(^{2^{m-1}}_2)$ types of right sequent set. That is, after searching left sequent sets with the same $m-1$ bits on the left and then searching at most $2*(^{2^{m-1}}_2)$ right sequent sets, we cannot continue to search right sequent sets. No matter which local configurations are selected initially, these two configurations $c_1$ and $c_2$ cannot be constructed. Therefore, the non-existence of replaceable or periodic local configurations is equal to the non-existence of two different configurations with the same successor, which is the definition of injectivity.
\end{proof}

\begin{theorem}
	\label{t1}
	\textbf{The non-existence of replaceable and periodic local configurations is equivalent to global injectivity.}
\end{theorem}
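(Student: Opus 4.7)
The plan is to prove the theorem as a direct assembly of Lemmas~\ref{l1}, \ref{l2}, and \ref{l3}, since together these three lemmas already cover both implications of the stated equivalence. The only work left is to spell out the logical packaging: rewrite one direction as a straightforward union of the first two lemmas, and obtain the reverse direction via the contrapositive of the third.

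First I would handle the forward implication ``global injectivity $\Rightarrow$ non-existence of replaceable and periodic local configurations.'' For this I take the contrapositive: assume there exist either replaceable or periodic local configurations $\alpha,\beta$. In the replaceable case, Lemma~\ref{l1} directly constructs two distinct configurations $c_1 \neq c_2$ with $\tau(c_1)=\tau(c_2)$, so the CA fails to be injective. In the periodic case, Lemma~\ref{l2} plays the same role. Either way, non-injectivity follows, establishing the contrapositive.

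Next I would handle the reverse implication ``non-existence of replaceable and periodic local configurations $\Rightarrow$ global injectivity.'' This is precisely the statement of Lemma~\ref{l3}: whenever no two configurations contain either type of structure, the left/right sequent-set exploration exhausts its finitely many states without producing a pair of distinct configurations with the same image, which matches the definition of injectivity. So one can invoke Lemma~\ref{l3} verbatim.

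The main (and really only) obstacle is cosmetic: making sure the quantification matches across the three lemmas. Lemmas~\ref{l1} and \ref{l2} are phrased ``if some pair of configurations contains such structures, then non-injective,'' while Lemma~\ref{l3} is phrased ``if no pair contains such structures, then injective.'' I would therefore state the theorem carefully as an equivalence of two global properties of the rule $f$, then cite the three lemmas in sequence and conclude. No additional combinatorial work should be required beyond what is already done in Lemma~\ref{l3}'s sequent-set argument.
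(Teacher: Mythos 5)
Your proposal is correct and follows essentially the same route as the paper: the paper's own proof of Theorem~\ref{t1} simply cites Lemmas~\ref{l1} and~\ref{l2} for the non-injectivity direction and Lemma~\ref{l3} for the converse. Your added care about the contrapositive and matching quantification is just a more explicit rendering of the same assembly.
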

\begin{proof}
	In lemma \ref{l1} and lemma \ref{l2}, We have proved that for a CA with neighborhood size $m$, if replaceable or periodic local configurations exist, then the CA is not injective. In lemma \ref{l3}, it is proved that if replaceable and periodic local configurations do not exist, the CA is injective. Therefore, the non-existence of replaceable and periodic local configurations is equivalent to injectivity.
\end{proof}

\begin{lemma}
	\label{l4}
	If there exist replaceable local configurations, there must be periodic local configurations.
\end{lemma}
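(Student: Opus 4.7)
The plan is to prove the lemma by an explicit construction: given a replaceable pair $\alpha,\beta$, I would append a copy of their common left boundary to each configuration to force $left_{m-1}(\cdot)$ and $right_{m-1}(\cdot)$ to coincide. Write $L := left_{m-1}(\alpha) = left_{m-1}(\beta)$ and $R := right_{m-1}(\alpha) = right_{m-1}(\beta)$, both strings of length $m-1$. The only reason $\alpha$ and $\beta$ might fail to be periodic themselves is that in general $L \neq R$. To remove this obstruction I would set $\alpha' := \alpha \cdot L$ and $\beta' := \beta \cdot L$, i.e., right-concatenate each with the length-$(m-1)$ string $L$.

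The verification then splits into two kinds of clauses of the periodic-configuration definition. The shape clauses are routine: $|\alpha'|,|\beta'| \geq (2m-1)+(m-1) \geq m$; $\alpha' \neq \beta'$ follows from $\alpha \neq \beta$; and $left_{m-1}(\alpha') = L = right_{m-1}(\alpha')$, with the same for $\beta'$, holds because $left_{m-1}$ is unchanged by right-concatenation while the appended segment has length exactly $m-1$ and equals $L$.

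The substantive step, which is where I expect the main bookkeeping to live, is checking that $\alpha'$ and $\beta'$ share the same $f$-successor. The idea is to split the windows of $\alpha'$ into those lying entirely inside the original $\alpha$ and those straddling the junction between $\alpha$ and the appended $L$. On the first group the $f$-values already match those of $\beta'$ by the replaceable hypothesis. On the second group the straddling windows of $\alpha'$ and $\beta'$ are literally identical strings, since they depend only on the common suffix $R$ (shared by $\alpha$ and $\beta$) and the common appended $L$, so the $f$-values agree tautologically. No pair-automaton or pigeonhole-type argument is required; the only care needed is in indexing the $m-1$ extra windows in the straddling region correctly before concluding that $(\alpha',\beta')$ meets every clause of the periodic-configuration definition.
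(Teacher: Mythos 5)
Your construction is exactly the paper's: the paper's proof of Lemma~\ref{l4} also appends the common left $m-1$ bits to the right end of $\alpha$ and $\beta$ (citing Figure~\ref{Fig5-3}) to produce a periodic pair. Your write-up is correct and simply makes explicit the successor-agreement check on the straddling windows, which the paper leaves implicit in the figure.
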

\begin{proof}
	According to Figure \ref{Fig5-1}, if there exist replaceable local configurations, we can duplicate the left $m-1$ bit on the right like Figure \ref{Fig5-3}. Local configurations $c_1$ and $c_2$ meet the definition of periodic local configurations. So we only need to search the periodic local configurations in the algorithm to determine global injectivity.
\end{proof}

\begin{figure}[h]
	\begin{center}
		\scalebox{0.07}{\includegraphics{./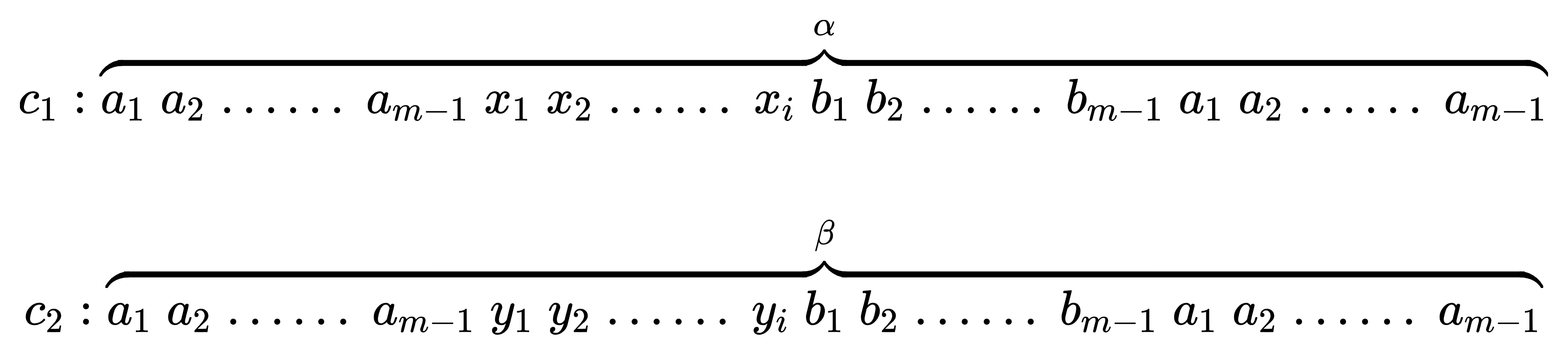}}
	\end{center}
	\caption{the periodic local configurations constructed from replaceable local configurations}
	\label{Fig5-3}
\end{figure}

\begin{theorem}
	\label{t2}
	\textbf{The non-existence of periodic local configurations is equivalent to global injectivity.}
\end{theorem}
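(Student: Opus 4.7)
The plan is to derive Theorem \ref{t2} as an easy corollary of the three results already established in this subsection, namely Lemma \ref{l2}, Theorem \ref{t1}, and Lemma \ref{l4}. The statement is an ``if and only if'', so I will handle the two directions separately, and for each one I will reduce it to a contrapositive of a lemma that is already proved.

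First I would dispose of the forward direction: \emph{global injectivity implies the non-existence of periodic local configurations}. This is nothing more than the contrapositive of Lemma \ref{l2}, which already says that whenever two configurations contain periodic local configurations $\alpha$ and $\beta$, the CA fails to be injective. So if the CA is injective, no such $\alpha,\beta$ can exist, and there is nothing further to check.

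The reverse direction, \emph{non-existence of periodic local configurations implies global injectivity}, is the one that needs the other two results. The idea is to upgrade ``no periodic'' to ``no periodic and no replaceable'', because once we have the stronger hypothesis we can invoke Theorem \ref{t1} directly. Lemma \ref{l4} gives exactly the contrapositive we need: if there were replaceable local configurations, then (by the duplication construction shown in Figure \ref{Fig5-3}) there would also be periodic ones. So assuming no periodic local configurations exist, no replaceable local configurations can exist either, and we are exactly in the situation of Theorem \ref{t1}, which concludes global injectivity.

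Since each direction is a one-step contrapositive of an already-proved lemma, there is no genuine obstacle here; the only care required is to make sure the logical chain Lemma \ref{l4} $\Rightarrow$ ``no periodic $\Rightarrow$ no replaceable'' is stated cleanly before plugging into Theorem \ref{t1}, and that the forward direction is not accidentally strengthened beyond what Lemma \ref{l2} gives. The proof will therefore be essentially two short paragraphs with no new construction.
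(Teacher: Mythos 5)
Your proof is correct and follows essentially the same route the paper intends: the paper leaves Theorem \ref{t2} unproved precisely because it is the immediate corollary of Lemma \ref{l2} (for the forward direction) and of Lemma \ref{l4} combined with Theorem \ref{t1} (for the reverse direction), which is exactly the two-step contrapositive argument you give.
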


This theorem is so concise that it can finish the determination for global injectivity perfectly. In the next subsection, the whole injectivity tree algorithm is introduced in detail.

\subsection{Injectivity tree algorithm and surjectivity with the periodic boundary}
First, in Amoroso's paper, a tree is used to decide the surjectivity of a CA. In essence, the tree uses paths as local configurations and tuples within nodes as a predecessor. After an empty node is found, or the tree cannot extend further, the construction of the tree is completed.

According to Theorem \ref{t2}, we need the information of the two sides of local configurations in the process of constructing the new tree for injectivity. Tuples in nodes record the right side of local configurations' predecessor. However, we cannot obtain the information of the left side of its predecessor in nodes. Therefore, the length of the tuple in the injectivity tree should be $2m-2$ bits, where the right $m-1$ bits of the tuple are constructed in a similar way to the surjectivity tree. If two tuples $\alpha=a_1a_2 \cdots a_{m-1}b_1b_2 \cdots b_{m-1}$ and $\beta=c_1c_2 \cdots c_{m-1}d_1d_2 \cdots d_{m-1}$ have $b_2b_3 \cdots b_{m-1}= d_1d_2 \cdots d_{m-2}$ and the node where $\alpha$ is located is the parent of the node where $\beta$ is located, then $\beta$ has the same left $m-1$ bit as $\alpha$ ($a_1a_2 \cdots a_{m-1}=c_1c_2 \cdots c_{m-1}$). This algorithm can also determine the surjectivity with the periodic boundary, The difference is that termination for injectivity is the periodic local configurations, and the termination of surjectivity with the periodic boundary is the Garden-of-Eden.

\begin{description}
	\item[Step 1] For a CA with neighborhood size $m$, get the local map table of the rule.
	\item[Step 2] Duplicate all $(m-1)$-tuples into $(2m-2)$-tuples. Add all these $(2m-2)$-tuples to the root node.
	\item[Step 3] For each node $M$ in layer $i$ $(i \geq 0)$, construct its children $M_0, M_1,\cdots, M_{p-1}$. For each tuple $a_1a_2 \cdots a_{2m-2}$ in $M$ and all $0 \leq j<p$, if $f(a_ma_{m+1} \cdots a_{2m-2}j)=t$, then $a_1a_2 \cdots a_{m-1}a_{m+1}a_{m+2} \cdots a_{2m-2}j$ is added to $M_t$. After each node is constructed, make a series of decisions. If the node is identical to the previous node, its children will not be constructed. If there is no tuple inside the node, then the CA has a Garden-of-Eden which decides the CA is neither injective nor surjective, and the algorithm is terminated immediately. If the node is not the root and there are two or more tuples inside, the two tuples $a_1a_2 \cdots a_{2m-2}$ and $b_1b_2 \cdots b_{2m-2}$ in it are both periodic. The CA is not injective, and the algorithm is terminated immediately.
	The following conditions can retrieve periodic local configurations.
	\begin{itemize}
		\item $a_1a_2 \cdots a_{m-1}= a_ma_{m+1} \cdots a_{2m-2}$
		\item $b_1b_2 \cdots b_{m-1}= b_mb_{m+1} \cdots b_{2m-2}$
	\end{itemize}
	
	\item[Step 4] If the three steps above are completed and the CA is not decided as non-injective, then the CA is injective.
\end{description}

Algorithm \ref{a3} shows its complete calculation process formally. This algorithm is very similar to Amoroso's algorithm for surjectivity.

\begin{algorithm}[h]
	\small
	\SetAlgoLined
	\label{a3}
	\KwData{local rule $f$}
	\KwResult{whether the CA is injective}
	let $L$ be an empty queue\;
	let $S_{root} \leftarrow \{(a_1...a_{m-1},a_1...a_{m-1}) | a_1,...,a_{m-1} \in \{0,1,...,p-1\}\}$ be the root set\;
	push $S_{root}$ to the back of $L$\;
	\While{$L$ is not empty}{
		remove the headset in $L$ and mark it as $S_{current}$\;
		\If{$S_{current}$ is empty}{
			return the CA is not injective\;
		}
		\If{$S_{current}$ includes more than one tuple $(a_1...a_{m-1},a_{m}...a_{2m-2})$ such that $a_1...a_{m-1}=a_{m}...a_{2m-2}$}{
			return the CA is not injective\;
		}
		\If{$S_{current}$ is a new set}{
			let $S_0$, $S_1$,...,$S_{p-1}$ be empty sets\;
			\For{each $(a_1a_2...a_{m-1},a_ma_{m+1}...a_{2m-2}) \in S_{current}$, each $d \in \{0,1,...,p-1\}$ and each $b \in \{0,1,...,p-1\}$}{
				\If{$f(a_ma_{m+1}...a_{2m-2}d) = b$} {
					add $(a_1...a_{m-1},a_{m+1}...a_{2m-2}d)$ into $S_b$\;
				}
			}
			push $S_0$, $S_1$,,...,$S_{p-1}$ to the back of $L$\;
		}
	}
	return the CA is injective\;
	\caption{decision algorithm for injectivity}
\end{algorithm}

\subsection{Example demonstration}

To show the injectivity tree algorithm in more detail, this subsection gives a construction example of the algorithm for rule 01100110. The three steps correspond to the first three steps above.
\begin{description}
	\item[Step 1] First, construct the rule table as shown in Figure \ref{Fig5-4}.
	\begin{figure}[h]
		\begin{center}
			\scalebox{0.04}{\includegraphics{./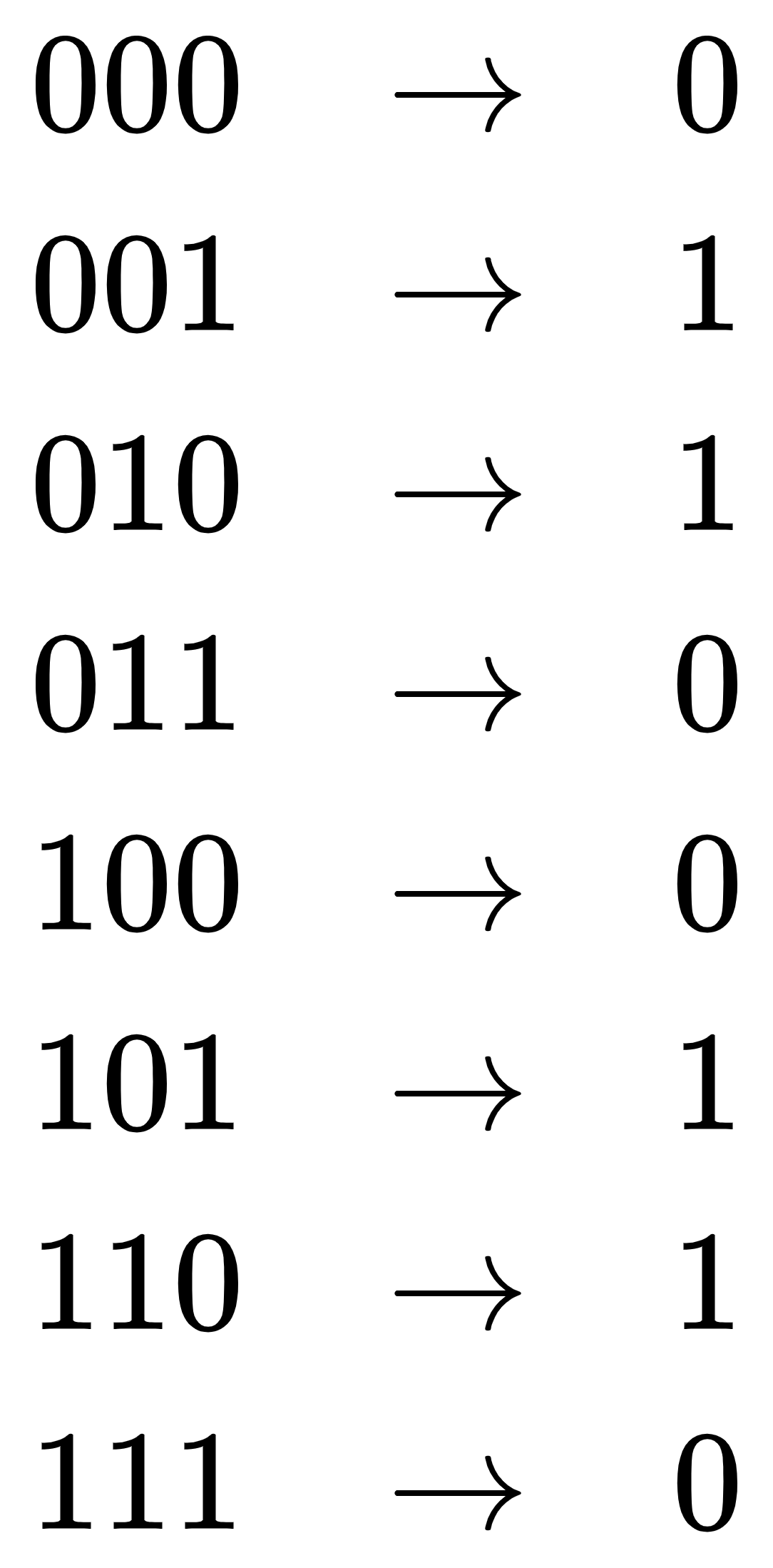}}
		\end{center}
		\caption{the rule table of CA with Wolfram number 01100110}
		\label{Fig5-4}
	\end{figure}
	\item[Step 2] Duplicate all 2-tuples into 4-tuples and add them to the root, as shown in Figure \ref{Fig5-5}. The first and last two bits of each tuple are the same.
	\begin{figure}[h]
		\begin{center}
			\scalebox{0.1}{\includegraphics{./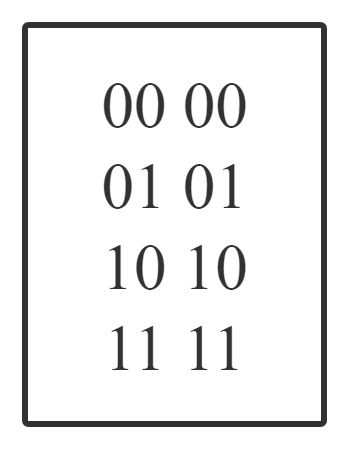}}
		\end{center}
		\caption{the construction of root}
		\label{Fig5-5}
	\end{figure}
	\item[Step 3] Constantly construct the children of each node with the given method. The whole injectivity tree is shown in Figure \ref{Fig5-6}. The algorithm is terminated rapidly because the tuples 00 00 and 11 11 are periodic. If we need to determine if this CA is surjective with the periodic boundary, we need to continue.
		\begin{figure}[h]
		\begin{center}
			\scalebox{0.1}{\includegraphics{./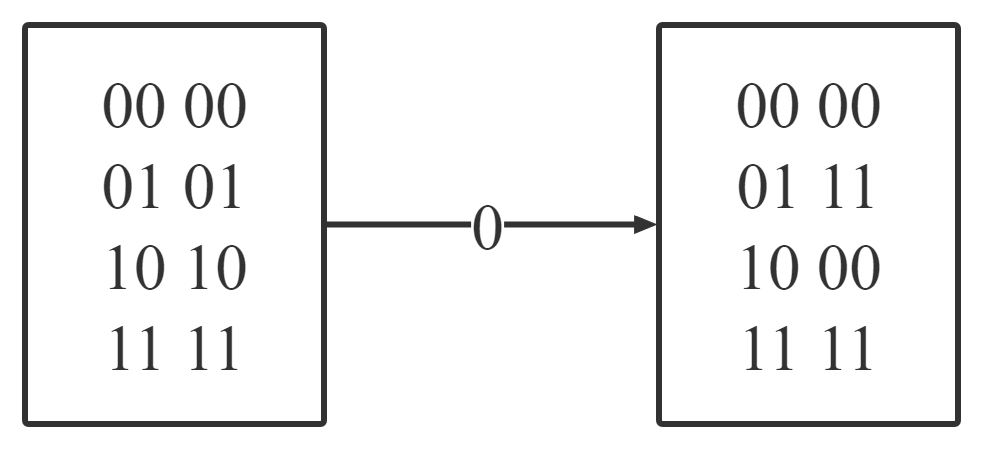}}
		\end{center}
		\caption{the injectivity tree}
		\label{Fig5-6}
	\end{figure}

\end{description}

\begin{figure}[h]
	\begin{center}
		\scalebox{0.1}{\includegraphics{./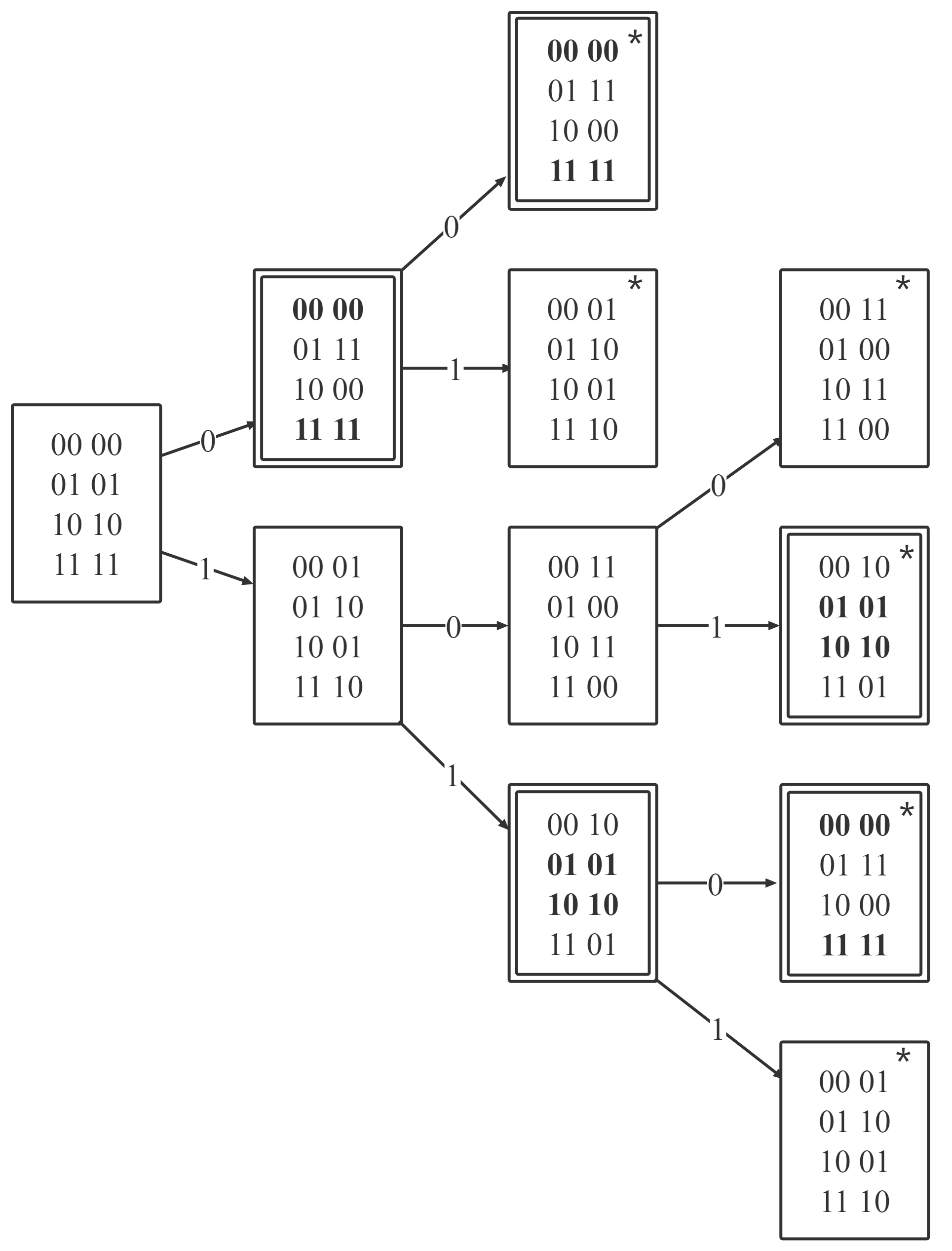}}
	\end{center}
	\caption{an injectivity tree without termination}
	\label{Fig5-7}
\end{figure}

To completely show this algorithm and its ability for surjectivity, we give an injectivity tree without non-injectivity termination, as shown in Figure \ref{Fig5-7}. The node with the symbol ``*" indicates that the node is the same as the one that appeared before. A node with double borders indicates that the node has at least one termination condition. For injectivity, we can find that although the overall structure of the tree may be enormous, the algorithm often terminates very quickly. On the other hand, this CA is a surjective, so it is larger than expected when determining the surjectivity. Hence, the complexity of the tree is much lower than expected. Figure \ref{Fig5-8} shows the tree of rule 110000110011100 with neighborhood size 4, and the results for surjectivity are shown in \ref{table1}.
\begin{figure}[h]
	\begin{center}
		\scalebox{0.08}{\includegraphics{./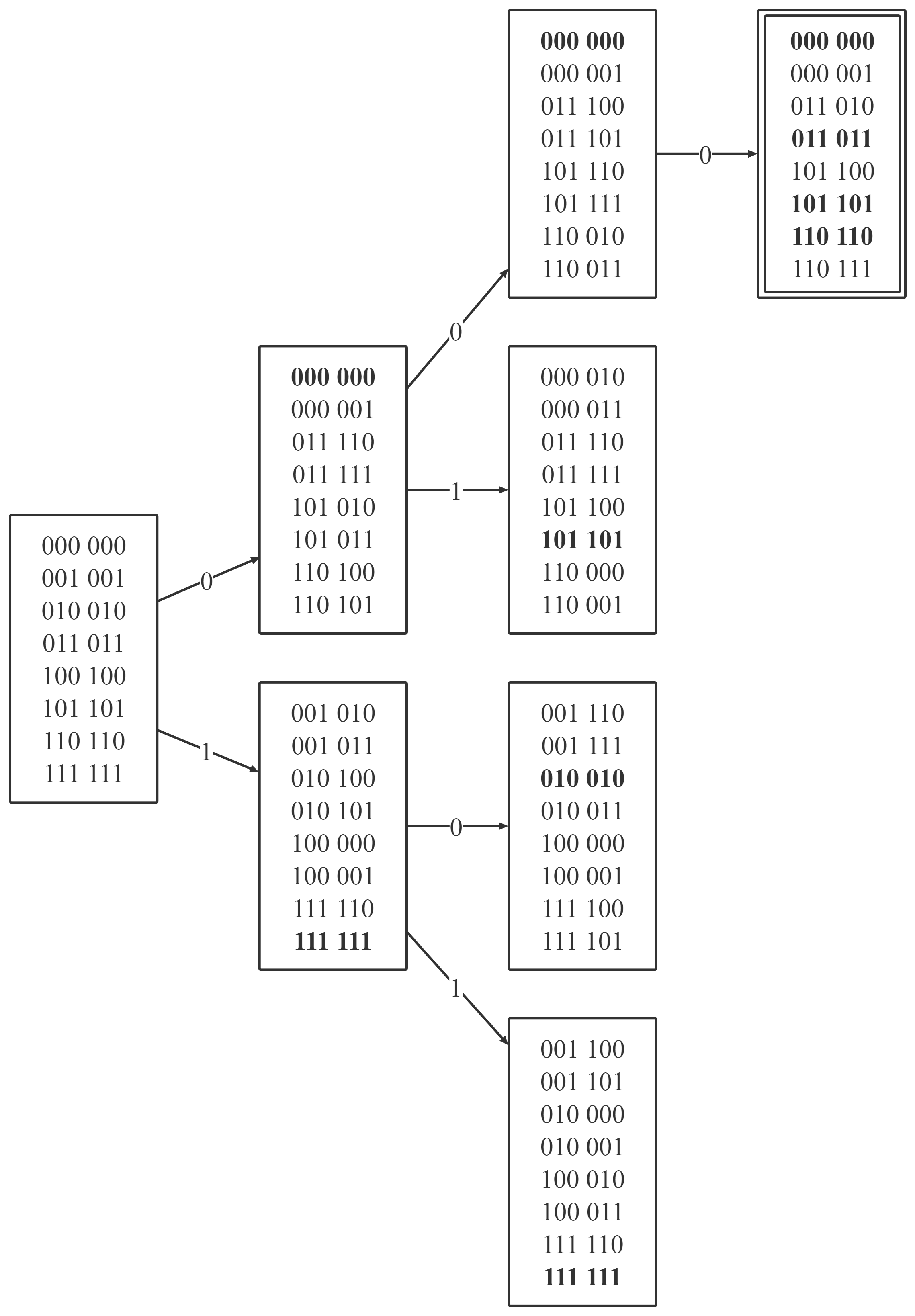}}
	\end{center}
	\caption{the injectivity tree of CA with Wolfram number 1100001100111100}
	\label{Fig5-8}
\end{figure}

\subsection{Theoretical complexity analysis}
Firstly, suppose the neighborhood size is $m$ and let $S=\{0, 1\}$. In the case of CA, the probability that each bit is $0$ and $1$ is equal. Therefore, if only $n$ consecutive cells are considered, the probability of $2^n$ local configurations is equal.

Secondly, this algorithm's complexity is the number of tuples in it. Each tuple can be represented by an integer which is usually regarded as the smallest unit in the algorithm. We only need to calculate the relationship between the number of tuples in the algorithm and the neighborhood size $m$.

We can divide all nodes into two categories: one with more than one tuple inside and the other with only one inside.

For the former category, assume the number of tuples in a node is $k$ $(k \geq 2)$. The length of tuples in the nodes is $2m-2$. Then each node will compare $k*(k-1)/2$ times to search periodic local configurations, and each comparison is independent of the other. On average, each tuple contributes $(k-1)/2$ comparisons. The probability of termination after each comparison is $P=1/2^{2m-2}$. The number of comparisons $t$ follows the geometric distribution. Mathematical expectation of $t$ is $E(t)= 2^{2m-2}$, so the program will be terminated after $2^{2m-2}$ comparisons on average. It has been calculated that each tuple will contribute $(k-1)/2$ comparisons on average, and the program will stop after $2^{2m-1}/(k-1)$ tuples on average. Because $k \geq 2$, $2^{2m-1}/(k-1)\leq 2^{2m-1}$. Therefore, the average number of tuples when the program terminates is less than $2^{2m-1}$.

For the latter category, there are at most $2^{2m-2}$ different single-tuple nodes on average. When they first appear, these nodes have $2^{2m-1}$ children. If they appear repeatedly, they do not have children. The nodes calculated in the previous paragraph also have $2^{2m-1}$ children. Now assume the number of single-tuple nodes reaches the maximum. The average number of single-tuple nodes shall be at most $2^{2m-2}+2^{2m-1}+2^{2m-1}*2=5*2^{2m-2}$. However, if we calculate it by probability, there is a 0.5 probability that each single-tuple node will have a Garden-of-Eden child, so the number of such nodes hardly needs to be considered during actual operation. Since we assume the worst-case scenario of $k=2$, the expected cost of time and space of this algorithm should be less than $2^{2m-1}$.It conforms to the quadratic time complexity proposed by Sutner (Sutner used $2^m$ as the variable). We continue to optimize the complexity of the algorithm based on this and give an exact and excellent expectation of its time and space consumption. The actual consumption is a half of the complexity.

\subsection{Actual running efficiency}
In many cases, the algorithm's complexity can not entirely represent the running efficiency of the algorithm. It is also closely related to the design ideas, data structures and other factors. Therefore, the actual operation is necessary for deciding whether an algorithm is excellent.

This paper compares the actual running time of the injectivity tree algorithm and Amoroso's injectivity algorithm from many aspects. For CA with a neighborhood size larger than 5, due to the small number of CA, it is possible to run all the CA (filter out the unbalanced CA in advance). For CA with a neighborhood size greater than 5, only some balanced CA are randomly selected for the comparative test because of the large total number of CA. The program runs in Java (JDK 1.8), and the running environment includes Windows 10, Eclipse, Maven, and JVM 1.8. The max heap size of the server is 512MB.

The running results of random balanced CA with neighborhood sizes 3 to 11 are shown in Table \ref{table2}. When the neighborhood size $(m)$ is small (3 and 4), there is little difference between the two algorithms. However, when the neighborhood size exceeds 6, there is a big gap between the two algorithms' operating efficiency. The larger the neighborhood size is, the more this gap will increase. Up to the neighborhood size of 11, the running time ratio of the two algorithms has reached impressive 21.481. Apart from time, Amoroso's injectivity algorithm is also out of memory with a neighborhood size of 12, while the injectivity tree algorithm can run successfully when the neighborhood size is 16.

\begin{table}[h]
	\center
	\caption{the comparison of two decision algorithms for injectivity}
	\label{table2}
	\begin{tabular}{cp{4em}p{8em}p{8em}c}
		\hline
		m & number & Average time in Amaroso's algorithm (ms) & Average  time in the injectivity tree algorithm (ms) & time ratio \\
		\hline
		3 & 70 & 0.0714 & 0.0571 & 1.250 \\

		4 & 12870 & 0.0153 & 0.0110 & 1.387 \\

		5 & 1000000 & 0.0384 & 0.0175 & 2.201 \\

		6 & 100000 & 0.157 & 0.0577 & 2.724 \\

		7 & 100000 & 0.621 & 0.149 & 4.157 \\

		8 & 10000 & 2.914 & 0.544 & 5.353 \\

		9 & 10000 & 15.559 & 1.946 & 7.994 \\

		10 & 1000 & 80.352 & 7.739 & 10.383 \\

		11 & 1000 & 623.436 & 29.023 & 21.481 \\

		12 & 1000 & out of memory & 110.900 & - \\

		13 & 1000 & out of memory & 424.911 & - \\

		16 & 10 & out of memory & 19758.5 & - \\
		\hline
	\end{tabular}
	
\end{table}

Amoroso's algorithm consumes a high proportion of time in step 4 (iterative deletion of crossed-out boxes). If the deletion is performed only through traversal, the complexity will be close to $O(2^{4m})$. The complexity needs to be reduced to quadratic time using a hashmap or other data structures. However, in step 4, we still need to traverse all sequent sets and their sequent sets, which increase exponentially with neighborhood size. It is unnecessary because some sequent sets cannot exist in infinite configurations. Sutner's algorithm is the same. They all need to spend time constructing a data structure (graph or table) and then spend more time disassembling and deleting the data structure element by element. In the injectivity tree algorithm, only the tuples that may appear in the infinite configurations are considered without deleting any nodes in the tree. If we can complete the determination of injectivity when constructing the data structure (tree), we will save most of the cost of deletion operations. That is another reason that the injectivity tree algorithm is the best.

Here, we can use a more common problem for analogy, the minimum spanning tree. Both the Prim algorithm and the breaking method can get correct results for this problem. Although there is no apparent difference in complexity between the breaking method and the Prim algorithm, the breaking method is rarely selected in actual operation. Because it is often more expensive to retrieve and delete a data structure than to build a data structure. What's more, the efficiency of the breaking method is not satisfactory without building a data structure. It is not difficult to understand why the complexity of the two algorithms is the same, but the actual operation efficiency is greatly different.

Secondly, there are multiple tuples in a node, which is more efficient than storing two tuples per storage unit in the algorithms of Amoroso and Sutner. We can retrieve all tuples in a node in linear time $k$ and perform a simple count, while the retrieval cost in the previous algorithms is $k*(k-1)/2$.

Next, this decision with trees is a linear process. After finding the termination, the algorithm can be terminated immediately instead of deleting all crossed-out boxes iteratively. For example, in Figure \ref{Fig5-5}, the tree ends rapidly because it does not take other sequent sets into consideration. The above explains why the injectivity tree algorithm has better time performance. In addition, Amoroso's algorithm needs to use a hashmap to reduce its complexity. The hashmap will cost a lot of space, leading to out-of-memory in the case of small neighborhood sizes. In conclusion, the space and time efficiency of the injectivity tree algorithm in the actual running is the best, it has reached the \textbf{state-of-art}.

\section{Injectivity with different boundaries \label{s5}}
Another advantage of the injectivity tree algorithm is that it can be easily extended to the boundary conditions. Comparing it with bounded surjectivity, we have some new conclusions. Since there is little difference between the reflective boundary and the fixed boundary, and their calculation process is almost the same, the reflective boundary will not be discussed in this section.

\begin{theorem}
	\label{t3}
	\textbf{The injectivity and surjectivity are equivalent with the same boundary.}
\end{theorem}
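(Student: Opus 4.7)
The plan is to reduce Theorem \ref{t3} to the pigeonhole principle for finite sets. The crucial observation is that, unlike the global (infinite) case where $\tau$ is a map between uncountable configuration spaces, for a bounded CA with $n$ cells the global transition function $\tau_n$ is a function from the finite set $S^n$ into itself, so injectivity, surjectivity and bijectivity all coincide.

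First, I would fix any cell count $n$ and any one of the three boundary types (null/fixed, periodic, or reflective). In each case, a configuration is exactly an element of $S^n$: the boundary convention only specifies how to supply the missing $L$ left and $R$ right neighbors needed to apply $f$ to each of the $n$ cells, so the image $\tau_n(c)$ is again a configuration of length $n$. Hence $\tau_n : S^n \to S^n$ has domain and codomain of the same finite cardinality $p^n$.

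Next, by the pigeonhole principle, any map between two finite sets of the same cardinality is injective iff it is surjective. Therefore $\tau_n$ has a Garden-of-Eden (fails surjectivity) iff some configuration has two distinct predecessors under $\tau_n$ (fails injectivity). Appealing to the definitions in Subsection \ref{s21} (a bounded CA is surjective iff $\tau_n$ is surjective for every $n$, and injective iff $\tau_n$ is injective for every $n$), this per-$n$ equivalence lifts directly: the CA is surjective under the chosen boundary iff it is injective under the same boundary.

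The only real work is making sure step one is airtight for each boundary flavour, i.e.\ that no boundary convention secretly changes the cardinality of the codomain; this is where I expect the main (minor) obstacle, since one must check that the extension of the rule used for reflective boundaries in Subsection \ref{s33} still yields a map $S^n \to S^n$. Once that is confirmed uniformly across boundary types, the theorem follows immediately from the pigeonhole argument above, with no need to revisit the tree-based algorithms or the sequent-set machinery.
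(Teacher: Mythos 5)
Your proposal is correct and is essentially the same argument the paper gives: the paper's proof also observes that for each fixed cell count $n$ a bounded CA induces a map between finite input and output sets of equal size, so injectivity and surjectivity coincide by the pigeonhole principle. Your version is merely more explicit about checking each boundary type and about quantifying over all $n$, which the paper leaves implicit.
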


\begin{proof}
	Suppose that the state set $S=\{0,1,\ldots, p-1\}$ and the number of cells is $n$. Because of the limited boundary, for each $n$, the elements of input and output sets are equal. So, injectivity and surjectivity are equivalent, or in other words, it is bijective and reversible.
\end{proof}

\subsection{Injectivity with fixed boundary}
The injectivity with fixed boundary is similar to the Subsection \ref{s32}. The difference is that the surjectivity tree is replaced by the injectivity tree. For the fixed boundary, the initial set and terminal set should be configured and because of the double length of the tuples, the elements in these two sets should also be doubled. Because of the definition of the fixed boundary, the local configurations within the boundary are always replaceable. So what we should search for becomes the replaceable configurations.

\begin{lemma}
	\label{l5}
	The following descriptions are equivalent:
	\begin{itemize}
		\item The CA is surjective with the fixed boundary.
		\item The CA is injective with the fixed boundary.
		\item There exists a Garden-of-Eden in the surjectivity tree.
		\item There exists replaceable local configurations in the injectivity tree.
	\end{itemize}
\end{lemma}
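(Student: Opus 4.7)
The plan is to establish the four-way equivalence by pairing (1) with (2), (1) with (3), and (2) with (4), exploiting material already proved in the paper. The equivalence (1) $\Leftrightarrow$ (2) is immediate from Theorem \ref{t3}: under a fixed boundary the configuration space is finite, so the global transition map is an endofunction of a finite set, and surjectivity, injectivity, and bijectivity coincide. No additional bookkeeping is required here.

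For (1) $\Leftrightarrow$ (3) I would invoke the correctness of the surjectivity tree with fixed boundary from Subsection \ref{s32}. The modified algorithm seeds its root from the initial set $s_i$ (tuples compatible with $b_l$) and stops whenever $S_{current} \setminus s_t = \emptyset$, the bounded analogue of the empty-node condition in Algorithm \ref{a1}. A branch that triggers this stopping rule reads off a bounded target configuration whose only candidate predecessors fail either the left or the right boundary constraint, i.e., a Garden-of-Eden for the boundaried CA; conversely any such Garden-of-Eden, traced layer by layer, forces the algorithm to encounter an invalidated node. Both soundness and completeness of the tree construction therefore yield the equivalence of (1) and (3).

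For (2) $\Leftrightarrow$ (4) I would adapt the injectivity tree of Section \ref{s4} to the fixed boundary in the same spirit: double the root tuples to the $2m-2$ length convention used in Algorithm \ref{a3}, and restrict the left and right halves to lie in $s_i$ and $s_t$ respectively. A pair of replaceable local configurations exhibited by a node then produces, exactly as in the proof of Lemma \ref{l1}, two distinct bounded configurations with a common successor that respect both boundaries. For the converse I would imitate the sequent-set walk of Lemma \ref{l3}: any bounded pair of distinct predecessors of a common image can be tracked through left and right sequent sets until, by pigeonhole, it collides in a node whose tuples satisfy the replaceability condition.

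The main obstacle will be this bounded analogue of Lemma \ref{l3}. In the global setting one was free to pad with zeros at will and to iterate sequent searches until the argument closed; under the fixed boundary the replaceable segment must anchor simultaneously on $s_i$ at the left and on $s_t$ at the right, and the pigeonhole exhaustion has to be carried out within this two-sided constraint. I would handle this by first running the left-sequent walk until the $(m-1)$-prefix repeats (which must happen inside the arena because the sequent space has finite size), then running the right-sequent walk under the additional condition that the terminal tuple lies in $s_t$, and finally gluing the two halves. Once this two-sided anchoring is established, the four-way equivalence follows by composition with the pairings above.
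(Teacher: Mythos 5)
The paper states Lemma \ref{l5} without any proof, so there is nothing to compare your proposal against; it has to stand on its own. Your pairing (1)$\Leftrightarrow$(2) via Theorem \ref{t3} is exactly right. The serious problem is a polarity mismatch between what your other two arguments establish and what the lemma literally asserts. Your own reasoning for (1)$\Leftrightarrow$(3) shows that a branch triggering the stopping rule ``reads off a bounded target configuration whose only candidate predecessors fail the boundary constraint, i.e., a Garden-of-Eden'' --- that is, the presence of a Garden-of-Eden in the tree witnesses that the CA is \emph{not} surjective, which is precisely how Algorithm \ref{a1} uses the empty node (it returns ``not surjective''). Likewise, a replaceable pair in the injectivity tree produces, via Lemma \ref{l1}, two distinct configurations with a common successor, i.e., a witness of \emph{non}-injectivity. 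So the argument you sketch actually delivers (1)$\Leftrightarrow$(2)$\Leftrightarrow\neg$(3)$\Leftrightarrow\neg$(4), and the four-way equivalence as literally written is false: the identity rule is surjective and injective under the null boundary, yet its surjectivity tree contains no Garden-of-Eden and its injectivity tree no replaceable pair. You need to negate items (3) and (4) (or items (1) and (2)) before proving anything, rather than asserting that soundness and completeness ``yield the equivalence of (1) and (3)'' from an argument that in fact yields the equivalence of $\neg$(1) and (3).

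A second, smaller gap sits in the completeness direction of your (2)$\Leftrightarrow$(4) step, which you correctly identify as the main obstacle but do not resolve. The definition of replaceable local configurations requires agreement on the leftmost and rightmost $m-1$ cells, whereas the fixed boundary only pins down $L$ cells on the left and $R$ on the right with $L+R=m-1$. Two distinct bounded preimages of a common image need only agree on that padding, not on $m-1$ cells at each end, so non-injectivity under the boundary does not immediately produce a replaceable pair in the paper's sense. Your two-sided sequent walk therefore has to be anchored by the weaker conditions $left_L(t)=b_l$ and $right_R(t)=b_r$ (membership in $s_i$ and $s_t$), and the gluing step must be argued under that relaxed notion rather than by appeal to Lemma \ref{l3}'s unbounded replaceability condition; as written, the sketch still invokes the latter.
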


\subsection{Injectivity with the periodic boundary}

For injectivity with the periodic boundary, we need not configure the initial set and terminal set. All we need to do is to search periodic local configurations. We can be surprised to find that the global injectivity is equivalent to the injectivity with the periodic boundary, for they have the same decision algorithm! With this theorem, we can equalize injectivity and reversibility.

\begin{theorem}
	\label{t4}
	\textbf{The reversibility with the periodic boundary and the global injectivity is equivalent.}
\end{theorem}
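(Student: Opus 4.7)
The plan is to reduce the statement to the combinatorics of periodic local configurations, using the two structural results already established. By Theorem \ref{t3} applied to the periodic boundary, ``reversibility'' coincides with ``injectivity'' in this setting; by Theorem \ref{t2}, global injectivity coincides with the non-existence of periodic local configurations. Hence it suffices to show that injectivity of the global transition map on circular configurations of length $n$, for every $n \geq 1$, is equivalent to the non-existence of periodic local configurations.

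The forward direction (global injectivity $\Rightarrow$ periodic-boundary injectivity) is the easier one. Any circular configuration of length $n$ unrolls to a periodic infinite configuration of period $n$, and the local rule $f$ treats the wrap-around neighbours of a boundary cell on the circle identically to the neighbouring cells found in the adjacent copies of the unrolled period. Consequently $\tau$ commutes with unrolling, so two distinct circular configurations sharing the same image would unroll to two distinct infinite configurations with the same image, contradicting global injectivity.

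For the converse I would argue by contrapositive. Suppose the CA fails global injectivity; Theorem \ref{t2} produces periodic local configurations $\alpha, \beta$ whose $(m-1)$-bit extremes match and with $f(\alpha)=f(\beta)$. Mimicking the construction used in the proof of Lemma \ref{l2}, I strip off the left $(m-1)$-bit prefix of each to obtain words $\alpha', \beta'$ of lengths $n_\alpha, n_\beta$, and form the periodic infinite configurations $c_1 = (\alpha')^\infty$ and $c_2 = (\beta')^\infty$, which by that lemma are distinct and satisfy $\tau(c_1)=\tau(c_2)$. Setting $n = \mathrm{lcm}(n_\alpha, n_\beta)$, both $c_1$ and $c_2$ can be viewed as circular configurations of length $n$; these length-$n$ words are distinct (their infinite extensions differ and both have period dividing $n$) and share the same image, witnessing the failure of injectivity under the periodic boundary at this specific $n$.

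The delicate point is the converse step: I must ensure that the construction genuinely yields two distinct circular configurations at the $\mathrm{lcm}$ length. The condition $left_{m-1}(\alpha) = right_{m-1}(\alpha)$ (and similarly for $\beta$) is what makes the concatenations seamless, and it also forces any difference between $\alpha$ and $\beta$ to survive in $\alpha'$ and $\beta'$. Once this is pinned down, the standard fact that two periodic sequences whose periods both divide $n$ coincide if and only if they agree on any length-$n$ window closes the argument.
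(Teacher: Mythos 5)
Your argument is correct and follows essentially the same route as the paper: both reduce the statement to the existence of periodic local configurations via Theorems \ref{t2} and \ref{t3}, and both obtain the circular counterexample from the Lemma \ref{l2} construction. You supply the unrolling argument and the distinctness checks that the paper's two-line proof leaves implicit (note that the equal-successor condition already forces $|\alpha'| = |\beta'|$, so your $\mathrm{lcm}$ step is vacuous but harmless).
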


\begin{proof}
	If a CA is global injective, there will not be periodic local configurations according to theorem \ref{t2}. So it is reversible under the periodic boundary condition. Similarly, if a CA is not global injective, there will be periodic local configurations. Then there is a certain configuration is not injective  under the periodic boundary condition. So it is not reversible under the periodic boundary condition.
\end{proof}

We also searched all CA with neighborhood sizes less than 6 and found that they did comply with this phenomenon.
Now we can use this decision algorithm for injectivity to determine the reversibility of freezing CA \cite{2015_Goles, 2020_Goles,2020_Goles2,2021_Goles} and number-conserving CA \cite{2017_Wolnik,2020_Wolnik,2022_Wolnik} and so on which are under periodic boundary.

\begin{theorem}
	\label{t5}
	\textbf{The bijectivity of an LCA with a certain boundary is equivalent to that all its transition matrices with this boundary are reversible.}
\end{theorem}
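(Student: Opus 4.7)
The plan is to reduce the statement to a size-by-size application of elementary linear algebra. For an LCA with state set $S = \mathbb{F}_p$, neighborhood size $m$, and a fixed choice of boundary, the global transition on $n$ cells is a map $\tau_n : \mathbb{F}_p^n \to \mathbb{F}_p^n$. Linearity of the local rule together with linearity (or at worst affinity) of the boundary condition implies that $\tau_n$ has the form $x \mapsto T_n x + b_n$, where $T_n \in \mathbb{F}_p^{n\times n}$ and $b_n \in \mathbb{F}_p^n$ depend only on the rule and the boundary data. The matrix $T_n$ is exactly the ``transition matrix'' of the LCA at size $n$ that appears in the statement.

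Next I would unpack the definition of bijectivity of a bounded CA from Subsection \ref{s21}: the LCA with this boundary is bijective iff $\tau_n$ is a bijection on $\mathbb{F}_p^n$ for every $n \geq 1$. By Theorem \ref{t3}, injectivity and surjectivity of $\tau_n$ coincide at each fixed $n$, so the requirement collapses to ``$\tau_n$ is bijective for every $n$''. Since the affine shift by $b_n$ is itself a bijection, $\tau_n$ is bijective iff the linear map $x \mapsto T_n x$ is bijective, and standard linear algebra over the field $\mathbb{F}_p$ then says this holds iff $T_n$ is invertible (reversible). Running this equivalence in both directions across all $n$ gives the theorem.

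The only genuine subtlety, and the step I would take care to write out explicitly, is verifying that each boundary treated in the paper (null/fixed, periodic, reflective) does produce a linear or affine global rule, so that ``transition matrix'' is well-defined in the first place. For null and periodic boundaries this is immediate; for reflective boundaries it follows because the reflected cells are $\mathbb{F}_p$-linear combinations (in fact outright copies) of interior cells, and for a nonzero fixed boundary the boundary contribution shows up solely as the affine offset $b_n$. Once that bookkeeping is done, the remainder of the proof is essentially a one-line translation between ``bijective $\mathbb{F}_p$-linear endomorphism of a finite-dimensional space'' and ``invertible matrix over $\mathbb{F}_p$'', applied uniformly in $n$.
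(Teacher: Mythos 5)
The paper states Theorem \ref{t5} with no proof at all, so there is nothing of the authors' to compare your argument against. Your proof is correct and is the natural one: at each cell count $n$ the bounded global map is affine, $\tau_n(x) = T_n x + b_n$ over $\mathbb{F}_p$, the translation by $b_n$ does not affect bijectivity, and a linear endomorphism of $\mathbb{F}_p^n$ is bijective exactly when its matrix is invertible; quantifying over all $n$ (which is what the paper's definition of bounded injectivity/surjectivity requires) yields the statement. The one step worth writing out in full is precisely the bookkeeping you flag yourself -- checking that each boundary type treated in the paper (null/fixed, periodic, reflective) genuinely produces an affine $\tau_n$ with a well-defined linear part $T_n$ -- since that is the only content of the theorem beyond elementary linear algebra over a finite field.
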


%

With the help of Theorem \ref{t5}, we do not need to discuss the bijective LCA's reversibility with boundaries anymore. All of them are reversible.

\section{Conclusion and future work \label{s6}}
\subsection{Conclusion}
In this paper, we greatly simplify Amoroso's surjectivity algorithm and extend it to the case with different boundaries. And a new decision algorithm for injectivity is proposed which greatly saves the time and space required for injectivity determination, which is caused by the excellent design and structure of this algorithm. In practical applications, the more complex the CA is, the better the simulation, encryption and other applications are. This algorithm can help us expand the calculation of one-dimensional CA to a more complex scale, thus achieving better results in practical applications. We also analyzed the boundary problem from a nonlinear point of view, solved the connection between the surjectivity and injectivity under different boundary conditions, and obtained the important equivalence relationship from the condition that their decision algorithms are the same. We also extend our work from linearity to nonlinearity and solve the reversibility problem of linear CA under a large class of boundary conditions through nonlinear decision algorithms. In addition, the various definitions, lemmas, theorems, algorithms and even logic and ideas proposed in the algorithm can help us further explore the characteristics of CA, provide a theoretical basis for future applications, and provide inspiration for future theoretical research.

\subsection{Future work}
During our research, three questions emerged. The first one concerns that in Table \ref{table1}, we can find the setting of the boundary does not affect the surjectivity of the CA. At least there is no exception for CA below neighborhood size 5. So for any fixed boundary CA, whether the setting of the boundary will never affect the surjectivity (reversibility) of the CA?
\begin{question}
	For any fixed boundary CA, whether the setting of the boundary will never affect the surjectivity (reversibility) of the CA?
\end{question}

The second question is connected with the reflective boundary in Table \ref{table1}, we can find all only two CA are surjective (the CA with identity rule and its opposite). So besides the identity CA, are there any other surjective CA with the reflective boundary?

\begin{question}
	Are there any other surjective CA with reflective boundary except for the CA with identity rule and its opposite?
\end{question}

The third question is based on Wang's work \cite{2015_ChaoWang}. We can attain the period of LCA with fixed boundary by constructing its DFA, so can we construct the DFA for periodic boundary and reflective boundary?

\begin{question}
	Can we construct the DFA of CA compatible with the periodic boundary and the reflective boundary and find its period?
\end{question}

\bibliography{ijuc}

\appendix 

\end{document}